\theoremstyle{definition}
\newtheorem{definition}{Definition}[section]
\theoremstyle{remark}
\newtheorem{oss}[definition]{Remark}
\theoremstyle{plain}
\newtheorem{theorem}[definition]{Theorem}
\newtheorem{prop}[definition]{Proposition}
\newtheorem{lemma}[definition]{Lemma}
\DeclareMathOperator{\Sym}{Sym}
\DeclareMathOperator{\Pol}{Pol}
\DeclareMathOperator{\Ker}{Ker}
\DeclareMathOperator{\Imag}{Im}
\title[The BV formalism: application to a matrix model]{The BV formalism: theory and application to a matrix model}
\author{Roberta A. Iseppi}
\address{Max Planck Institute for Mathematics, Vivatsgasse 7, 53111 Bonn, Germany}
\email{iseppi@mpim-bonn.mpg.de}
\date{\today}
\begin{document}

\begin{abstract}
\noindent
We review the BV formalism in the context of $0$-dimensional gauge theories. For a gauge theory $(X_{0}, S_{0})$ with an affine configuration space $X_{0}$, we describe an algorithm to construct a corresponding extended theory $(\widetilde{X}, \widetilde{S})$, obtained by introducing ghost and anti-ghost fields, with $\widetilde{S}$ a solution of the classical master equation in $\mathcal{O}_{\widetilde{X}}$. This construction is the first step to define the (gauge-fixed) BRST cohomology complex associated to $(\widetilde{X}, \widetilde{S})$, which encodes many interesting information on the initial gauge theory $(X_{0}, S_{0})$. The second part of this article is devoted to the application of this method to a matrix model endowed with a $U(2)$-gauge symmetry, explicitly determining the corresponding $\widetilde{X}$ and the general solution $\widetilde{S}$ of the classical master equation for the model.
\end{abstract}

\maketitle

\section{Origin and motivation for the BV formalism: a short introduction}
\label{Introduction}
\noindent
Quantization of a gauge field theory is notoriously difficult, even if one takes a formal approach and works with Feynman's path integral \cite{Feynman}. The problem is caused by the very nature of a gauge theory: indeed, while a physical theory can be described as a pair {\small{$(X_{0}, S_{0})$}}, with {\small{$X_{0}$}} the {\em (field) configuration space} and {\small{$S_{0}:X_{0} \rightarrow \mathbb{R}$}} the {\em action functional}, a gauge theory has to be thought as a triple where, next to the configuration space {\small{$X_{0}$}} and the action {\small{$S_{0}$}}, there is also a so-called {\em gauge group} {\small{$\mathcal{G}$}} which describes the symmetries of the theory (cf. Definition \ref{Def: gauge theory}). It is precisely the presence of local symmetries in the action functional {\small{$S_{0}$}} which impedes the direct application of the canonical quantization via path integral of this class of theories. In 1967 Faddeev and Popov  (cf. \cite{Faddeev-Popov}) gave a fundamental contribution to the solution of this problem: their main idea was to introduce, next to the physical fields already in the theory, other {\em non-physical fields} (called {\em Faddeev-Popov ghost fields}), used to eliminate the local symmetries appearing in the path integral and avoid the degeneracy of the propagator. In the years that followed, this idea has been developed (cf. \cite{Zinn-Justin}), arriving to the formulation of the BV construction, discovered by Batalin and Vilkovisky (cf. \cite{BV1}, \cite{BV2}). \\
\\
Thus, the BV formalism found its origin and motivation in the context of the quantization of gauge theory. However, this formalism has shown to be a very interesting construction on its own and not only as a tool in the BRST quantization procedure. For this reason, recently the BV formalism has been thoroughly investigated for infinite-dimensional gauge theories (review articles \cite{AKSZ}, \cite{BBH}, \cite{BBH2}, \cite{Fuster}, \cite{GPS}, \cite{Henneaux2}) as well as in the finite-dimensional context (cf. \cite{Fior}, \cite{Schw}). \\
\\
The purpose of this article is to present the analysis of the BV formalism in the context of gauge theories defined over a $0$-dimensional spacetime. This has demonstrated to be a surprisingly rich context for the analysis of the BV construction from a completely algebraic point of view. The BV formalism can then be simply viewed as a method to construct, given an initial gauge theory {\small{$(X_{0}, S_{0})$}}, with {\small{$X_{0}$}} the {\em initial configuration space} and {\small{$S_{0}$}} the {\em initial action}, a new pair {\small{$(\widetilde{X}, \widetilde{S})$}}, where the {\em extended configuration space} {\small{$\widetilde{X}$}} is obtained as an extension of {\small{$X_{0}$}} with {\em ghost/anti-ghost fields}:
$$\widetilde{X} = X_{0} \cup \{ \mbox{ghost/anti-ghost fields} \},$$
and the {\em extended action} {\small{$\widetilde{S}$}} is defined by adding extra terms depending on the ghost/anti-ghost fields to the initial action {\small{$S_{0}$}}:
$$\widetilde{S} = S_{0} + \mbox{terms depending on ghost/anti-ghost fields}.$$
In order to properly extend the gauge theory {\small{$(X_{0}, S_{0})$}}, further requirements need to be imposed on the new pair {\small{$(\widetilde{X}, \widetilde{S})$}} (cf. Section \ref{Section: The BV formalism}). As a consequence, each extended pair {\small{$(\widetilde{X}, \widetilde{S})$}} naturally induces a cohomology complex, known as the {\em classical BRST cohomology complex}. The discovery that the ghost/anti-ghost fields can be viewed as generators of a cohomology complex should be ascribed to Becchi, Rouet, Stora (cf. \cite{BRS}, \cite{BRS3}) and, independently, Tyutin (cf. \cite{T}), in 1975, after whom this cohomology complex is named. Moreover, under certain conditions, this cohomology complex is still present also after the {\em gauge-fixing procedure} has been implemented to remove the anti-fields/anti-ghost fields both from the extended configuration space {\small{$\widetilde{X}$}} and the extended action {\small{$\widetilde{S}$}}. This {\em gauge-fixed BRST cohomology complex} plays an interesting role also from a physical point of view: indeed, 
the cohomology induced by the gauge-fixed extended theory {\small{$(\widetilde{X}, \widetilde{S})|_{\Psi}$}}, with $\Psi$ a suitable gauge-fixing fermion, allows to recover physically relevant information on the initial gauge theory {\small{$(X_{0}, S_{0})$}} such as the gauge-invariant functions, i.e. the (classical) observables of {\small{$(X_{0}, S_{0})$}}:
$$H^{0}(\widetilde{X}|_{\Psi}, d_{\widetilde{S}}|_{\Psi}) = \{ \mbox{ Observables of the initial gauge theory } (X_{0}, S_{0})\ \}.$$
In this article, we will concentrate on this first part of the construction, namely on the BV formalism, analyzing it from a totally algebraic point of view, postponing the second part of the construction, that is, the presentation of the gauge-fixing procedure and the explicit description of the gauge-fixed BRST cohomology complex to \cite{articolo_cohomology}. \\
\\
We focus on $0$-dimen\-sio\-nal gauge theories, that is, theories where the configuration space {\small{$X_{0}$}} is finite dimensional. The article is structured as follows: the notion of {\em extended theory} is presented in Section \ref{Section: The BV formalism} as the mathematical object to describe the pair {\small{$(\widetilde{X}, \widetilde{S})$}} obtained as the extension of a theory {\small{$(X_{0}, S_{0})$}} through the introduction of ghost/anti-ghost fields. Section \ref{Section: Construction of extended varieties} is devoted to explain a new procedure, inspired by the method in \cite{felder}, to construct a pair {\small{$(\widetilde{X}, \widetilde{S})$}} given a $0$-dimensional gauge theory {\small{$(X_{0}, S_{0})$}}. This method suggests a possible way to face a problem that often appears when the BV construction is applied in the finite-dimensional context: the emergence of an {\em infinite} number of ghost/anti-ghost fields to be added to the theory. The novel construction we present gives a method to select a {\em finite} family of anti-ghost fields (and so of ghost fields). Furthermore, these anti-ghost fields can be algorithmically determined, avoiding the full description of a Tate resolution, which may impose a heavy computation. This BV construction is applied in Section \ref{Section: Extended varieties for a matrix model} to a matrix model with a {\small{$U(2)$}}-gauge symmetry. The implementation of this procedure on an explicit model is relevant because not only it allows to understand the meaning of this {\em finite} family of ghost fields and to determine the characteristics of the gauge theory that it detects, but also because, determining a (generic) solution {\small{$\widetilde{S}$}} for the classical master equation in {\small{$\mathcal{O}_{\widetilde{X}}$}}, it provides all the necessary elements to explicitly determining generators and coboundary operator for the classical BRST complex and hence for the gauge-fixed BRST complex (for this second part of the construction on the {\small{$U(2)$}}-matrix model we refer to the forthcoming \cite{articolo_cohomology}).\\
\\
\noindent
{\em Acknowledgement:} the research presented in this article was supported by Netherlands Organization for Scientific Research (NWO), through Vrije Competitie (pro\-ject number 613.000.910). The author would also like to thank the Max Planck Institute for Mathematics in Bonn for hospitality and support during the final stages of writing this article. Moreover, many thanks are owed to Giovanni Felder for his accurate remarks. Finally, special thanks are due to the author's supervisor Walter D. van Suijlekom, for having introduced her to the subject, for many helpful discussions and for carefully reading the drafts of this article. 
 
\section[The BV formalism]{The BV formalism}
\label{Section: The BV formalism}
\noindent
Given an initial gauge theory {\small{$(X_{0}, S_{0})$}}, with {\small{$X_{0}$}} the initial configuration space and {\small{$S_{0}$}} the initial action functional on {\small{$X_{0}$}}, we want to describe a method to extend it to a new pair {\small{$(\widetilde{X}, \widetilde{S})$}}:
$$\begin{array}{ccc}
(X_{0}, S_{0}) & -------\rightarrow  &(\widetilde{X}, \widetilde{S})\\
\mbox{\small{initial gauge theory}} & \mbox{\tiny{BV construction}} & \mbox{\small{extended theory}}
\end{array}
$$
 In this section we recall the requirements on the pair {\small{$(\widetilde{X}, \widetilde{S})$}} that are imposed from a physical point of view and we introduce the mathematical notion that describes this extended pair, that is, the notion of {\em extended theory}. To describe the BV construction we restrict to consider $0$-dimensional gauge theories, whose definition we recall for completeness. 

\begin{definition}
\label{Def: gauge theory}
Let {\small{$X_{0}$}} be a vector space over {\small{$\mathbb{R}$}}, {\small{$S_{0}$}} be a functional on {\small{$X_{0}$}}, {\small{$S_{0}: X_{0} \rightarrow \mathbb{R}$}}, and {\small{$\mathcal{G}$}} be a group acting on {\small{$X_{0}$}} through an action {\small{$F:\mathcal{G} \times X_{0} \rightarrow X_{0}.$}} Then the pair {\small{$(X_{0}, S_{0})$}} is a {\em gauge theory with gauge group {\small{$\mathcal{G}$}}} if it holds that
 $$S_{0}(F(g, \varphi)) = S_{0}(\varphi), \quad \quad \forall \varphi \in X_{0}, \ \forall g \in \mathcal{G}.$$
\end{definition}

Concerning the terminology, {\small{$X_{0}$}} is called the {\em configuration space}, an element $\varphi$ in {\small{$X_{0}$}} is a {\em gauge field}, the functional {\small{$S_{0}$}} is the {\em action}, and {\small{$\mathcal{G}$}} is known as the {\em gauge group}.

\begin{definition}
A {\em field/ghost field} $\varphi$ is a graded variable characterized by two integers:
 $$\deg(\varphi) \in \mathbb{Z} \quad \mbox{ and } \quad \epsilon(\varphi) \in \{ 0, 1 \},  \quad \mbox{ with } \quad \deg(\varphi) = \epsilon(\varphi) \quad (\mbox{mod} \ \mathbb{Z}/2).$$
$\deg(\varphi)$ is the {\em ghost degree}, while $\epsilon(\varphi)$ is the {\em parity}, which distinguishes between the bosonic case, where $\epsilon(\varphi)=0$ and $\varphi$ behaves as a real variable, and the fermionic case, where $\epsilon(\varphi)=1$ and $\varphi$ behaves as a Grassmannian variable: 
$$\varphi \psi = - \psi \varphi, \quad \quad \mbox{ and }\quad\quad \varphi^{2} =0, \quad \quad  \mbox { if } \quad \epsilon(\varphi) = \epsilon(\psi) = 1.$$
The {\em anti-field/anti-ghost field} $\varphi^{*}$ corresponding to a field/ghost field $\varphi$ satisfies
$$\deg(\varphi^{*}) = - \deg(\varphi) -1, \quad \quad \mbox{ and } \quad \quad \epsilon(\varphi^{*}) = \epsilon(\varphi) +1, \quad (\mbox{mod} \ \mathbb{Z}/2).$$
\end{definition}
In what follows, the term {\em fields} is reserved to the initial fields in {\small{$X_{0}$}} while {\em ghost fields} is used to identify the extra fields introduced by the BV construction. Analogously, {\em anti-fields} is specifically used for the anti-particles corresponding to the initial fields while the {\em anti-ghost fields} are the ones corresponding to the ghost fields.\\
\\
In the BV formalism, the {\em extended configuration space} {\small{$\widetilde{X}$}} is required to be a 
super graded vector space, suitable to be decomposed as the following direct sum:
\begin{equation}
\label{eq: decomp X-tilde}
\widetilde{X} \cong \mathcal{F} \oplus \mathcal{F}^{*}[1], \quad \quad \mbox{ with } \quad [\widetilde{X}]^{0} = X_{0}
\end{equation}
for {\small{$\mathcal{F} = \oplus_{i \geqslant 0} \mathcal{F}^{i}$}} a graded locally free {\small{$\mathcal{O}_{X_{0}}$}}-module with homogeneous components of finite rank and with {\small{$\mathcal{O}_{X_{0}}$}} the algebra of regular functions on {\small{$X_{0}$}}. While $\mathcal{F}$ describes the fields/ghost-fields content of {\small{$\widetilde{X}$}}, {\small{$\mathcal{F}^{*}[1]$}} determines the anti-fields/anti-ghost fields part, with {\small{$\mathcal{F}^{*}[1]$}} that denotes the shifted dual module of $\mathcal{F}$:
 $$\mathcal{F}^{*}[1] = \oplus_{i \in \mathbb{Z}} \big[ \mathcal{F}^{*}[1] \big]^{i} \quad \quad \mbox{ with } \quad \big[ \mathcal{F}^{*}[1] \big]^{i} = \big[ \mathcal{F}^{*} \big]^{i+1}.$$
The condition of {\small{$\widetilde{X}$}} being a {\em super} graded vector space encodes the fact that even-graded elements in {\small{$\widetilde{X}$}} behaves as real variables while odd-graded elements are treated as Grassmannian variables.

\begin{oss}
 Each homogeneous component of the graded vector space {\small{$\widetilde{X}$}} is supposed to be finite-dimensional. However, no hypothesis is done on the number of non-trivial homogeneous components in $\widetilde{X}$, which may be infinite. Therefore, for each ghost degree there is a finite number of independent ghost/anti-ghost fields with that degree while it is allowed the possibility of having independent ghost/anti-ghost fields with ghost degree any integer value. 
\end{oss}

Given a super graded vector space {\small{$\widetilde{X}$}}, by {\small{$\mathcal{O}_{\widetilde{X}}$}} we denote the {\em algebra of regular functions} on {\small{$\widetilde{X}$}}, which is the graded symmetric algebra generated by {\small{$\widetilde{X}$}} over the ring {\small{$\mathcal{O}_{X_{0}}$}}:
$$\mathcal{O}_{\widetilde{X}} = Sym_{\mathcal{O}_{X_{0}}}(\mathcal{F} \oplus \mathcal{F}^{*}[1]).$$  
Due to the graded structure on {\small{$\widetilde{X}$, $\mathcal{O}_{\widetilde{X}}$}} is naturally endowed with a graded algebra structure. Moreover, {\small{$\mathcal{O}_{\widetilde{X}}$}} can be equipped with a graded Poisson bracket structure of degree $1$:
 $$\left\lbrace -, - \right\rbrace: [\mathcal{O}_{\widetilde{X}}]^{i}\otimes [\mathcal{O}_{\widetilde{X}}]^{j}\rightarrow [\mathcal{O}_{\widetilde{X}}]^{i+j+1},$$
which is completely determined by imposing that it acts as follows on the generators:
$$\big\{ \beta_{i}, \beta_{j}\big\}= 0 , \quad \quad \quad \big\{ \beta^{*}_{i}, \beta_{j}\big\} = \delta_{ij} \quad  \quad \mbox{ and } \quad \quad \big\{ \beta^{*}_{i}, \beta^{*}_{j}\big\}=0 $$  
for {\small{$\beta_{i} \in \mathcal{F}^{p}$}} and {\small{$\beta^{*}_{i} \in \big[ \mathcal{F}^{*}[1] \big]^{-p-1}$}}, and then extending it by requiring being linear and graded Poisson. More explicitly, we are first of all imposing that the bracket is {\em graded symmetric}, that is, it satisfies 
$$\left\lbrace a, b \right\rbrace = - (-1)^{(|a| -1) (|b|-1)} \left\lbrace b, a \right\rbrace,$$
for $a, b, c \in$ {\small{$\mathcal{O}_{\widetilde{X}}$}}, where $|a|$ denotes the degree of an element in {\small{$\mathcal{O}_{\widetilde{X}}$}}. Then, we demand that it is also {\em graded Poisson}, so that it holds the following equality:
$$\left\lbrace ab, c \right\rbrace =a \left\lbrace b,c \right\rbrace + (-1)^{|a||b|}b \left\lbrace a,c \right\rbrace,$$
where, once again, $a, b, c$ are elements in {\small{$\mathcal{O}_{\widetilde{X}}$}}. Finally, the bracket has to satisfy also the {\em graded Jacobi identity}, which reads as follows:
{\small{$$(-1)^{(|a|-1)(|c|-1)}\left\lbrace a, \left\lbrace b,c \right\rbrace \right\rbrace+ (-1)^{(|b|-1)(|a|-1)}\left\lbrace b, \left\lbrace c,a \right\rbrace\right\rbrace + (-1)^{(|c|-1)(|b|-1)}\left\lbrace c, \left\lbrace a,b \right\rbrace\right\rbrace=0.$$}}
This Poisson bracket structure on {\small{$\mathcal{O}_{\widetilde{X}}$}} enters the notion of {\em extended action} {\small{$\widetilde{S}$}}, as recalled in the following definition.

\begin{definition}
Let the pair {\small{$(X_{0}, S_{0})$}} be a gauge theory. Then an {\em extended theory} associated to {\small{$(X_{0}, S_{0})$}} is a pair {\small{$(\widetilde{X}, \widetilde{S})$}} where {\small{$\widetilde{X} = \oplus_{i \in \mathbb{Z}} [\widetilde{X}]^{i}$}} is a super graded vector space as in \eqref{eq: decomp X-tilde} and {\small{$\widetilde{S} \in [\mathcal{O}_{\widetilde{X}}]^{0}$}} is a regular function on {\small{$\widetilde{X}$}}, with {\small{$\widetilde{S}|_{X_{0}}=S_{0}$}}, {\small{$\widetilde{S}\neq S_{0}$}} and such that it solves the {\em classical master equation}, i.e., 
$$\{\widetilde{S}, \widetilde{S}\}=0,$$
where {\small{$\{ -, -\}$}} denotes the graded Poisson structure on the algebra {\small{$\mathcal{O}_{\widetilde{X}}$}}.
 \end{definition}

Even though this notion of extended theory is similar to the notion of {\em  BV variety}, introduced by Felder and Kazhdan in \cite{felder}, two main differences distinguish them. The first difference is technical: instead of allowing the configuration space {\small{$X_{0}$}} to be a {\em nonsingular algebraic variety}, we require it to be an {\em affine space}. This permits to give a global description of the BV construction, that is, we can describe {\small{$\widetilde{X}$}} as a graded vector space  instead of having to define it as a sheaf of modules. However, because the matrix models we are interested in satisfy this stronger requirement of having an affine configuration space {\small{$X_{0}$}}, we prefer to avoid the technical difficulties related to having to work with local descriptions. Concerning the second difference, this is more conceptual: we do not require the negatively graded cohomology complex induced by the pair {\small{$(\widetilde{X}, \widetilde{S})$}} to be acyclic. As more precisely explained in Section \ref{Section: Construction of extended varieties}, by omitting this extra condition we are able to develop a construction where only a {\em finite} number of ghost fields has to be added.
 
 \begin{oss}
A fundamental consequence of {\small{$\widetilde{S}$}} solving the classical master equation is that a differential operator is induced over {\small{$\mathcal{O}_{\widetilde{X}}$}}. Indeed, given {\small{$\widetilde{S} \in [\mathcal{O}_{\widetilde{X}}]^{0}$}} such that {\small{$\{ \widetilde{S}, \widetilde{S}\} =0$}}, the map 
$$ d_{\widetilde{S}}: [\mathcal{O}_{\widetilde{X}}]^{\bullet}  \longrightarrow   [\mathcal{O}_{\widetilde{X}}]^{\bullet+1} \quad \quad \mbox{ with } \quad  d_{\widetilde{S}}(\varphi):=\big\{\widetilde{S}, \varphi \big\} , $$
defines a linear differential operator of degree $1$ over the graded Poisson algebra {\small{$\mathcal{O}_{\widetilde{X}}$}}. While the linearity of {\small{$d_{\widetilde{S}}$}} as well as its being of degree $1$ are immediate consequences of properties of the Poisson bracket, the fact that {\small{$d_{\widetilde{S}}$}} is a differential, i.e., that it satisfies {\small{$d_{\widetilde{S}}^{2}=0$}}, follows from the graded Jacobi identity and {\small{$\widetilde{S}$}} solving the classical master equation. Moreover, due to the properties of the Poisson bracket on {\small{$\mathcal{O}_{\widetilde{X}}$}}, the operator {\small{$d_{\widetilde{S}}$}} is {\em graded derivative}: 
$$d_{\widetilde{S}}(\varphi\psi)=(d_{\widetilde{S}}(\varphi))\psi+ (-1)^{\deg(\varphi)}\varphi d_{\widetilde{S}}(\psi),$$ 
for $\varphi, \psi \in$ {\small{$\mathcal{O}_{\widetilde{X}}$}}, and {\em graded distributive} when composed together with the bracket:
$$d_{\widetilde{S}}(\left\lbrace \varphi, \psi\right\rbrace)= \left\lbrace d_{\widetilde{S}}(\varphi), \psi\right\rbrace + (-1)^{\deg(\varphi)-1} \left\lbrace \varphi, d_{\widetilde{S}}(\psi)\right\rbrace.$$
Hence {\small{$(\mathcal{O}_{\widetilde{X}}, \{ - , - \}, d_{\widetilde{S}})$}} defines a {\em differential $P_{0}$-algebra} (cf. \cite{Costello}).
\end{oss}

The differential operator {\small{$d_{\widetilde{S}}$}} acts as coboundary operator for the {\em classical BRST cohomology complex}.

\begin{definition}
\label{definition classical BRST complex}
Given an extended theory {\small{$(\widetilde{X}, \widetilde{S})$}}, the corresponding {\em classical BRST cohomology complex} is a complex where the cochain spaces are 
$$\mathcal{C}^{i}(\widetilde{X}, d_{\widetilde{S}}) = [Sym_{\mathcal{O}_{X_{0}}}(\widetilde{X})]^{i},  $$
for $ i \in \mathbb{Z}$, and the coboundary operator is {\small{$d_{\widetilde{S}}:= \{ \widetilde{S}, - \}$}}.
\end{definition}

\section{Construction of an extended theory}
\label{Section: Construction of extended varieties}
\noindent
In this section we present a method to explicitly construct, from an initial gauge theory {\small{$(X_{0}, S_{0})$}} a corresponding extended theory {\small{$(\widetilde{X}, \widetilde{S})$}}. This method makes fundamental use of a Tate resolution of the Jacobian ring {\small{$J(S_{0})$}}. A crucial passage is to select a {\em finite} family of generators, among all (often infinitely many) generators introduced in a Tate resolution. This finite family of generators determines a finite family of ghost/anti-ghost fields, used to extend the initial configuration space {\small{$X_{0}$}}. The method we are going to explain is inspired by the construction presented by Felder and Kazhdan in \cite{felder}. However, it is precisely this selection of the generators that distinguishes our method and allows us to have an explicit description of the extended theory {\small{$(\widetilde{X}, \widetilde{S})$}} and of the corresponding BRST cohomology complex. 

\begin{definition}
Given a gauge theory {\small{$(X_{0}, S_{0})$}}, with {\small{$X_{0}$}} an $m$-dimensional affine space, the {\em Jacobian ring} {\small{$J(S_{0})$}} of {\small{$S_{0}$}} is the quotient {\small{$J(S_{0}) = \mathcal{O}_{X_{0}} / \Imag(\delta)$}}, for 
 $$ \delta:  T_{X_{0}} \longrightarrow \mathcal{O}_{X_{0}} \quad \quad \mbox{ with } \quad \quad \delta(\xi) : = \xi(S_{0}). $$
\end{definition}
Equivalently, given a global system of coordinates {\small{$\{ x_{i}\}$, $i=1, \dots, m$}}, on {\small{$X_{0}$}}, the Jacobian ring {\small{$J(S_{0})$}} is suitable for the following more explicit description:  
\begin{equation}
J(S_{0})= \frac{\mathcal{O}_{X_{0}}}{\langle \partial_{1}S_{0}, \cdots, \partial_{m}S_{0} \rangle} \ ,
\end{equation}
for {\small{$\partial_{i}S_{0}$}} the partial derivative of {\small{$S_{0}$}} with respect to the $i-$th coordinate {\small{$x_{i}$}}. As previously announced, a key role in the construction of the pair $(\widetilde{X}, \widetilde{S})$ is played by a {\em Tate resolution} of $J(S_{0})$, whose construction we briefly recall. 

\subsection{Tate's algorithm}
\label{Tate's algorithm}
Since the Tate resolution (cf. \cite{Tate}) is an important tool with a broad range of applications, we review it in a more general context than the one in which we are going to use it. Thus let $R$ be a commutative Noetherian ring with unit element and let $M$ be an ideal in $R$. Tate's algorithm is a canonical procedure for constructing a {\em free resolution} of $R/M$ that is a {\em differential $R$-algebra}.

\begin{definition}
Let {\small{$A=\oplus_{i \in \mathbb{Z}_{\leqslant 0}} A_{i}$}} be a {\small{$\mathbb{Z}_{\leqslant 0}$}}-graded commutative algebra whose homogeneous components {\small{$A_{i}$}} are finitely generated as $R$-modules and such that {\small{$A_{0}\cong R \cdot 1_{A}$}}, for {\small{$1_{A}$}} a unit element. Given {\small{$\delta =\lbrace \delta_{i} \rbrace_{i \in \mathbb{Z}_{\leqslant 0}},$}} with {\small{$\delta_{i}: A_{i}\rightarrow A_{i+1}$}}, a graded derivation of degree $1$ satisfying {\small{$\delta^2 = 0$}}, the pair {\small{$(A, \delta)$}} defines a {\em differential $R$-algebra}. 
\end{definition}

A differential $R$-algebra $(A, \delta)$ can equivalently be viewed as a complex of finitely generated $R$-modules with a coboundary operator $\delta$:
$$
\cdots  \xrightarrow{\delta_{-n-1}} A_{-n}  \xrightarrow{\delta_{-n}} A_{-n+1} \xrightarrow{\delta_{-n+1}}
 \cdots  \xrightarrow{\delta_{-2}} A_{-1} \xrightarrow{\delta_{-1}} A_{0}\cong R  \xrightarrow{\delta_{0}}
 0. 
$$
Therefore, using the standard terminology, the {\em cohomology algebra} $H(A)$ of a differential $R$-algebra $A$ is defined as follows:
$$H(A)= \bigoplus_{k \in \mathbb{Z}_{\leqslant 0}} H^{k}(A) \quad \quad \mbox{with} \quad \quad  H^{k}(A)= \frac{\Ker(\delta_{k})}{\Imag(\delta_{k-1})}.$$

\begin{definition}
A differential $R$-algebra $(A, \delta)$ is a {\em Tate resolution} of the $R$-module $R/M$ if $A$ is {\em free}, that is, if each homogeneous component {\small{$A_{i}$}} is a free $R$-module, and {\em acyclic}, i.e., it holds that:
$$H(A)=H^{0}(A) = R/M.$$
\end{definition}

The conditions stated in the above definition can be equivalently rephrased by requiring that the {\small{$A_{i}$}} are free $R$-modules and the following sequence is exact: 
\begin{equation}
\cdots  \xrightarrow{\delta_{-3}} A_{-2} \xrightarrow{\delta_{-2}} A_{-1}  \xrightarrow{\delta_{-1}} R  \xrightarrow{\pi} R/M \rightarrow 0  \mbox{ } ,
\end{equation}
where $\pi$ is the canonical projection map. \\
\\
Tate's algorithm is an inductive construction to determine, given a ring $R$ and an ideal $M$, a free and acyclic differential $R$-algebra $(A, \delta)$ such that {\small{$H^{0}(A) = R/M$}}. This algebra $A$ is obtained as the union of an ascending chain of differential $R$-algebras {\small{$(A^{i}, \delta^{i})$}}, with
$$
A^{0}:=R\subseteq A^{-1}\subseteq A^{-2}\subseteq \cdots
$$
Thus we begin by describing the first step of this algorithm, that is, the construction of the differential $R$-algebra {\small{$(A^{-1}, \delta^{-1})$}}, and then we explain the inductive step, that is, the construction of the differential $R$-algebra {\small{$(A^{-(k+1)}, \delta^{-(k+1)})$}}, given {\small{$(A^{-k}, \delta^{-k})$}}.\\
\\
\noindent
{\em First step: the construction of }{\small{$(A^{-1}, \delta^{-1})$}}. We want to construct a differential $R$-algebra {\small{$(A^{-1}, \delta^{-1})$}} such that it determines a free and acyclic resolution of $R/M$ up to degree $0$. Let {\small{$\{ \tau_{i} \}_{i}$}}, with $i= 1, \dots, n$, be a finite set of generators for $M$ as an $R$-module. Then {\small{$A^{-1}$}} is defined as the following extension:
$$A^{-1} = \Pol_{R}(T_1, \dots ,T_n),$$
with {\small{$\{T_{i}\}$}}, for $i=1, \dots, n$, a family of formal Grassmannian variables of degree $-1$. The differential {\small{$\delta^{-1}=\{ \delta^{-1}_{j}\}$}}, {\small{$j\in \mathbb{Z}_{\leqslant 0}$}}, on the algebra {\small{$A^{-1}$}} is uniquely determined by imposing that
$$\delta^{-1}_{-1}(T_i)=\tau_i , $$
for $i=1, \dots, n$, and then extending this to a map on the whole {\small{$A^{-1}$}} by R-linearity. Due to this definition of {\small{$\delta^{-1}$}}, it holds that 
$$H^{0}(A^{-1})=R/M$$
and hence the sequence
$$
A^{-1}_{-1} 
 \xrightarrow{\delta^{-1}_{-1}}
A^{0}_{0}= R 
 \xrightarrow{\pi}
 R/M  \rightarrow 
0 \ 
$$
is exact. So {\small{$(A^{-1}, \delta^{-1})$}} gives a free and acyclic resolution of $R/M$ up to degree $0$.\\
\\
\noindent
{\em The inductive step: the construction of {\small{$(A^{-(k+1)}, \delta^{-(k+1)})$}}, for $k>0$}. Let {\small{$(A^{-k}, \delta^{-k})$}} be a differential $R$-algebra that induces a free and acyclic resolution of $R/M$ up to degree {\small{$-k+1$}}. In other words, we are assuming that the following sequence of finitely generated modules over $R$ is exact:
\begin{equation}
A^{-k}_{-k} 
 \xrightarrow{\delta^{-k}_{-k}}
 \dots \xrightarrow{\delta^{-2}_{-2}} A^{-1}_{-1} 
 \xrightarrow{\delta^{-1}_{-1}}
A^{0}_{0}= R 
 \xrightarrow{\pi}
 R/M  \rightarrow 
0 \ ,
\label{xk}
\end{equation}
and that is holds:
$$H^{-j}(A^{-k}) = 
\left\lbrace \begin{array}{ll}
R/M & \quad  \mbox{if} \quad j=0\\
0 & \quad  \mbox{if} \quad j=1, \dots, k-1 \ .
\end{array}
\right.
$$
\noindent
The aim is now to define a differential $R$-algebra {\small{$(A^{-(k+1)}, \delta^{-(k+1)})$}} such that, when adding the module {\small{$A^{-(k+1)}_{-(k+1)}$}} to the sequence (\ref{xk}), we obtain a free and acyclic resolution of $R/M$ up to degree $-k$. To achieve this goal let {\small{$\{\tau_i\}$}}, with {\small{$i=1, \dots, n_{k}$}} be $(-k)$-cocycles such that their corresponding cohomology classes generate {\small{$H^{-k}(A^{-k})$}} as finitely-generated module over $R$. We define the graded $R$-algebra {\small{$A^{-(k+1)}$}} as the extension of the algebra {\small{$A^{-k}$}} by the introduction of a family {\small{$\{ T_{i} \}_{i}$, $i=1, \dots, n_{k}$}}, of formal variables of degree $-(k+1)$. Hence: 
$$A^{-(k+1)} = \Pol_{A^{-k}}(T_1, \dots, T_{n_{k}}),$$
where the variables {\small{$\{ T_{i} \}$}} are {\em real} variables if their degree $-(k+1)$ is even or {\em Grassmannian} if the degree $-(k+1)$ is odd. Finally, the derivation {\small{$\delta^{-(k+1)}$}} is uniquely determined as extension of the derivation {\small{$\delta^{-k}$}} to the whole {\small{$A^{-(k+1)}$}} by requiring that it acts as follows on the new variables:
$$\delta^{-(k+1)}(T_{i}^j)=  \tau_{i} \cdot j T_{i}^{j-1}.$$
Because, by construction, {\small{$A^{-(k+1)}$}} coincides with {\small{$A^{-k}$}} in degree higher than $-(k+1)$, it still holds that
$$H^{-j}(A^{-(k+1)}) = 
\left\lbrace \begin{array}{ll}
R/M & \quad  \mbox{if} \quad j=0\\
0 & \quad  \mbox{if} \quad j=1, \dots, k-1 \ 
\end{array}
\right.
$$
while in degree $-k$ we have that  
$$H^{-k}(A^{-(k+1)})= \frac{H^{-k}(A^{-k})}{ \langle \tau_1, \dots, \tau_{n_{k}}\rangle } = \{ 0 \}.$$
This is consequence of {\small{$\{ \tau_i \}$}} being a family of generators of {\small{$H^{-k}(A^{-k})$}} as an $R$-module. Hence {\small{$A^{-(k+1)}$}} gives a free and acyclic resolution of $R/M$ up to degree $-k$. Therefore, the algebra {\small{$A=\bigcup_{k=0}^{\infty} A^{-k}$}} is a free and acyclic resolution of $R/M$ such that {\small{$H^{0}(A) = R/M$}}.\\
\\
We observe that, by construction, the algebra {\small{$A$}} of a Tate resolution {\small{$(A, \delta)$}} can be viewed as a symmetric algebra
$$A=\Sym_{R}(\mathcal{W}^{*}_{T})$$ 
for some graded {\small{$R$}}-module {\small{$\mathcal{W}^{*}_{T}=\oplus_{j\leqslant -1}[\mathcal{W}_{T}^{*}]^{j}$}}, with locally-free and finitely-generated homogeneous components {\small{$[\mathcal{W}^{*}_{T}]^{j}$}}. This {\small{$R$}}-module {\small{$\mathcal{W}^{*}_{T}$}} collects all the formal variables $\{T_{i}\}$ introduced step by step in the Tate's algorithm.

\begin{oss}
The procedure described above to extend the algebra {\small{$A^{-k}$}} to the algebra {\small{$A^{-(k+1)}$}} in the case of an {\em even} degree $-(k+1)$ it is applicable only under the hypothesis that the ring $R$ contains a subfield of characteristic $0$. Since this hypothesis will be satisfied in the context where we want to apply Tate's algorithm, we restricted to this setting, which simplifies the construction (cf. \cite{Tate}). 
\end{oss}

\subsection{The extended configuration space} 
A key role in the construction of an extended theory {\small{$(\widetilde{X}, \widetilde{S})$}} associated to a given gauge theory {\small{$(X_{0}, S_{0})$}} is played by the so-called {\em generators of type $\beta$} of a given Tate resolution {\small{$(A, \delta)$}} of {\small{$J(S_{0})$}} on the ring {\small{$\mathcal{O}_{X_{0}}$}}.

 \begin{definition}
Given a Tate resolution {\small{$(A, \delta)$}}, the {\em generators of type $\beta$} are inductively defined as follows: all generators of {\small{$A^{-1}_{-1}$}} are of type $\beta$ while a generator {\small{$\gamma^{-q} \in A^{-q}_{-q}$}}, with $q>1$, is of type $\beta$ if there exists {\small{$\{ r_{j} \}$}}, with {\small{$j= 1, \dots, m_{j}$}}, a collection  of elements  of the ring $R$ such that
$$\delta(\gamma^{-q}) = r_{1}\beta^{-q+1}_{1} + r_{2} \beta^{-q+1}_{2} + \dots + r_{m_{j}} \beta^{-q+1}_{m_{j}}$$
with {\small{$\beta^{-q+1}_{1}, \beta^{-q+1}_{2}, \dots, \beta^{-q+1}_{m_{j}},$}} generators of type $\beta$ of degree $-q+1$. Thus for this generator {\small{$\gamma^{-q}$}} we use the notation {\small{$\beta^{-q}$}}. 
\end{definition}

Selecting the generators of type $\beta$ of a Tate resolution {\small{$(A, \delta)$}} of {\small{$J(S_{0})$}} on {\small{$\mathcal{O}_{X_{0}}$}} we determine the anti-fields/anti-ghost fields used to extend the space {\small{$X_{0}$}}. However, in order to comply with the requirement of {\small{$\widetilde{X}$}} being symmetric in the field/ghost field content on one hand and the anti-field/anti-ghost field content on the other hand, we have to impose that 
{\small{$(A, \delta)$}} is a Tate resolution of {\small{$J(S_{0})$}} on {\small{$\mathcal{O}_{X_{0}}$}} satisfying 
\begin{equation}
\label{condizione Tate resolution}
A^{-1}=T_{X_{0}}[1],
\end{equation}
with {\small{$T_{X_{0}}[1]$}} the shifted tangent space of {\small{$X_{0}$}}. Indeed, this property ensures that {\small{$\widetilde{X}$}} has in degree $-1$ the anti-fields corresponding to the fields in {\small{$X_{0}$}}. As a consequence, the algebra {\small{$A$}} is a symmetric algebra 
 $$
A=\Sym_{\mathcal{O}_{X_{0}}}(\mathcal{W}^{*}_{T}) \quad \quad \mbox{ with } \quad \quad \mathcal{W}_{T}^{*}= T_{X_{0}}[1]\oplus\mathcal{E}_{T}^{*}[1], 
$$
for {\small{$\mathcal{E}^{*}_{T}=\oplus_{i\leqslant - 1} [\mathcal{E}^{*}_{T}]^{i}$}} a graded {\small{$\mathcal{O}_{X_{0}}$}}-module with finitely-generated homogeneous components. Then, given a Tate resolution {\small{$(A, \delta)$}} satisfying condition \eqref{condizione Tate resolution}, let {\small{$\mathcal{W}^{*}= T_{X_{0}}[1]\oplus\mathcal{E}^{*}[1]$}} denote the {\small{$\mathcal{O}_{X_{0}}$}}-submodule of {\small{$\mathcal{W}_{T}^{*}$}} determined by generators of type $\beta$. Thus the corresponding extended configuration space {\small{$\widetilde{X}$}} is defined as follows:
$$\widetilde{X} = \mathcal{E}^{*}[1] \oplus T_{X_{0}}[1] \oplus X_{0} \oplus \mathcal{E}, $$
where the summand {\small{$T_{X_{0}}[1]$}} gives the anti-fields content of {\small{$\widetilde{X}$}}, {\small{$\mathcal{E}^{*}[1]$}} controls the anti-ghost fields content of {\small{$\widetilde{X}$}} and has been determined by the type-$\beta$ generators of a Tate resolution of {\small{$J(S_{0})$}} on {\small{$\mathcal{O}_{X_{0}}$}} and $\mathcal{E}$ is the shifted-dual module of {\small{$\mathcal{E}^{*}[1]$}} describing the ghost-fields content of {\small{$\widetilde{X}$}}.

 \begin{oss}
To determine a Tate resolution may require involved computations and force the introduction of an infinite number of new generators: indeed, while the hypothesis of $R$ being a Noetherian ring guarantees that at each step of the algorithm we introduce only a finite number of new formal variables, nothing ensures that the procedure has to stop after a finite number of steps. This would then give rise to an extended configuration space {\small{$\widetilde{X}$}} with independent ghost/anti-ghost fields in any degree. On the contrary, the generators of type $\beta$ can be easily computed by an inductive procedure and, if no redundant generators are introduced throughout the construction, they form a finite family, inducing an extended configuration space {\small{$\widetilde{X}$}} with a finite number of ghost/anti-ghost fields. Moreover, taking a full Tate resolution may cause the loss of meaning for concepts as the {\em level of reducibility $L$} of an extended theory: indeed, because $L$ is defined as $L=k-1$, where $k$ is the highest degree of a non-trivial homogeneous component in {\small{$\widetilde{X}$}}, if {\small{$\widetilde{X}$}} has non-trivial components in any degree, $L$ would not detect any properties of a gauge theory. On the other hand, considering only type-$\beta$ generators, $L$ still gives an estimate of the \textquotedblleft complexity\textquotedblright of the gauge symmetry considered (cf. Section \ref{Section: Extended varieties for a matrix model}). 
 \end{oss}

\subsection{The extended action}
We explain how to construct an extended action {\small{$\widetilde{S}$}} by a sequence of approximations, where at each step we add terms with increasing degree in the ghost fields. This procedure was inspired by the one proposed in \cite{felder}, the main difference laying in the fact of only consider the preselected family of type-$\beta$ generators. This allows to a more explicit description of the procedure. \\
\\
We begin by introducing the following notation: given a generic element {\small{$\varphi \in \mathcal{O}_{\widetilde{X}}$}}, we write it as as sum {\small{$ \varphi = \sum_{i \in I} \varphi_{n, i} \varphi_{p, i}, $}}
where 
$$ \varphi_{n, i} \in \Sym_{\mathcal{O}_{X_{0}}}(\mathcal{E}^{*}[1] \oplus T^{*}_{X_{0}}[1]\oplus X_{0}) \quad \quad \mbox{ and } \quad \quad  \varphi_{p, i} \in \Sym_{\mathcal{O}_{X_{0}}}(\mathcal{E}). $$
Then,  the {\em negative degree} and the {\em positive degree} of an element $\varphi$ are respectively defined as follows:
$$\deg_{n}(\varphi) = \max_{i \in I} \ \deg(\varphi_{n, i}) \in \mathbb{Z}_{\leqslant 0} \quad \mbox{ and } \quad  \deg_{p}(\varphi) = \min_{i \in I} \ \deg(\varphi_{p, i})\in \mathbb{Z}_{\geqslant 0},$$
while the {\em degree} of a summand {\small{$\varphi_{i}$}} is the sum of its negative and its positive degree:
$$\deg(\varphi_{i}) = \deg_{n}(\varphi_{i}) + \deg_{p}(\varphi_{i})  \in \mathbb{Z}.$$
Moreover, as useful tools in the following construction we consider the family of ideals {\small{$\{ F^{r}\mathcal{O}_{\widetilde{X}}\}$}}, for $r\geqslant 0 $, which determines a descending filtration of the graded algebra {\small{$\mathcal{O}_{\widetilde{X}}$}}:
$$\mathcal{O}_{\widetilde{X}} = F^{0}\mathcal{O}_{\widetilde{X}}\supseteq F^{1}\mathcal{O}_{\widetilde{X}} \supseteq F^{2}\mathcal{O}_{\widetilde{X}}\supseteq \dots$$
with
$$F^{r}\mathcal{O}_{\widetilde{X}} := \{ \varphi \in \mathcal{O}_{\widetilde{X}}: \ \deg_{p}( \varphi)\geqslant r \} \cup \{ 0\}.$$
While the ideals {\small{$F^{r}\mathcal{O}_{\widetilde{X}}$}} look at the positive degree of the elements in {\small{$\mathcal{O}_{\widetilde{X}}$}}, we introduce a family {\small{$\{ I_{\widetilde{X}}^{(q)}\}$}}, for  $q\geqslant 1$, of modules over {\small{$\mathcal{O}_{X_{0}}$}}, which consider the number of positively-graded generators in elements of {\small{$\mathcal{O}_{\widetilde{X}}$}}:  
$$I_{\widetilde{X}}^{(q)}: = \Big\{ \varphi= \sum_{I=(i_{1}, \dots, i_{q})} \varphi_{n, I} \beta_{i_{1}}\dots\beta_{i_{q}}: \varphi_{n, I} \in \Sym_{\mathcal{O}_{X_{0}}}(T_{X_{0}}[1] \oplus \mathcal{E}^{*}[1])\Big\}.$$
Notice that these modules are also closed with respect to the product for elements in {\small{$\mathcal{O}_{\widetilde{X}} / F^{1}\mathcal{O}_{\widetilde{X}}$}}. Finally, {\small{$\{I_{\widetilde{X}}^{\geqslant q}\}$}} is a collection of ideals, which are defined as unions of the {\small{$\mathcal{O}_{X_{0}}$}}-modules {\small{$I_{\widetilde{X}}^{(q)}$}}, i.e.
$$I_{\widetilde{X}}^{\geqslant q} = \bigcup_{s \geqslant q} I_{\widetilde{X}}^{(s)}, $$
for $q\geqslant 1$ and {\small{$S_{lin}$}} is the so-called {\em linear action}, which denotes the following sum:
$$S_{lin} = S_{0} + \sum_{k \in K} \delta(C^{*}_{k}) C_{k} + \sum_{j\in J} \delta(\beta^{*}_{j}) \beta_{j},$$
with {\small{$C^{*}_{k} \in [\mathcal{E}^{*}[1]]^{-2}$, $C_{k} \in [\mathcal{E}]^{1}$, $\beta^{*}_{j} \in [\mathcal{E}^{*}[1]]^{-(q+1)}$}} and {\small{$\beta_{j}\in [\mathcal{E}]^{q}$}}, for $q>1$. \\
We now present two technical lemmas whose statements have to be compared with \cite[Lemmas $4.3, 4.6$]{felder}. Although the statements have been partially modified to comply with the fact of considering only type-$\beta$ generators, because the proofs can be easily adapted to this different setting, we refer to \cite{felder} for further details. 

\begin{lemma}
\label{lemma tecnico 1}
The $1$-degree differential operator 
$$\Phi:=\left\lbrace S_{lin}, - \right\rbrace: \mathcal{O}^{\bullet}_{\widetilde{X}} \longrightarrow \mathcal{O}^{\bullet +1}_{\widetilde{X}}$$
coincides with the operator $\delta \otimes Id$ modulo {\small{$F^{1}\mathcal{O}_{\widetilde{X}}$}}, for $\delta$ the differential operator of the fixed Tate resolution {\small{$(A, \delta)$}}, restricted to act only on generators of type $\beta$. Therefore, given an element {\small{$\varphi\in \mathcal{O}_{\widetilde{X}}$}}, with {\small{$\varphi = \sum_{i} \varphi_{n, i} \varphi_{p, i}$}}, it holds:
$$\Phi(\varphi):= \left\lbrace S_{lin}, \varphi \right\rbrace = \sum_{i} \delta(\varphi_{n, i}) \cdot Id(\varphi_{p, i}) \quad \quad (mod \mbox{ } F^{1}\mathcal{O}_{\widetilde{X}}). $$
\end{lemma}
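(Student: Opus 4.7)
The plan is to exploit the fact that both $\Phi=\{S_{lin},-\}$ and $\delta\otimes Id$ are graded derivations of degree $1$ on the graded algebra $\mathcal{O}_{\widetilde{X}}$. Since $F^{1}\mathcal{O}_{\widetilde{X}}$ is a graded ideal, both operators descend to graded derivations on the quotient, and it therefore suffices to verify the asserted identity modulo $F^{1}$ on an $\mathcal{O}_{X_{0}}$-algebra generating set of $\mathcal{O}_{\widetilde{X}}$, namely: the coordinates $x_{i}$ on $X_{0}$, the anti-fields in $T_{X_{0}}[1]$, the type-$\beta$ anti-ghosts in $\mathcal{E}^{*}[1]$, and the ghosts in $\mathcal{E}$. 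The identity on a general $\varphi=\sum_{i}\varphi_{n,i}\varphi_{p,i}$ will then follow by the graded Leibniz rule, using that $F^{1}\cdot\mathcal{O}_{\widetilde{X}}\subseteq F^{1}$ to absorb all correction terms.

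On a ghost generator $\beta\in\mathcal{E}$ both sides vanish identically: the right-hand side is $\delta(1)\cdot\beta=0$, while $\Phi(\beta)$ vanishes by Leibniz, because every factor of every summand of $S_{lin}$ is either an element of $\mathcal{O}_{X_{0}}$ (whose bracket with $\beta$ is zero, since $\beta$ is only dual to its own anti-ghost) or is itself a ghost field (and $\{\beta_{i},\beta_{j}\}=0$ by the defining relations of the Poisson bracket). On a coordinate $x_{i}\in X_{0}$ the right-hand side is again zero, and it remains to check that each bracket $\{\delta(C^{*}_{k})C_{k},x_{i}\}$ and $\{\delta(\beta^{*}_{j})\beta_{j},x_{i}\}$ lies in $F^{1}$: the relation $\{x^{*}_{l},x_{i}\}=\delta_{il}$ merely contracts away one anti-field from the coefficient $\delta(C^{*}_{k})$ or $\delta(\beta^{*}_{j})$, while the ghost factor $C_{k}$ or $\beta_{j}$ is carried along intact, producing an element of $F^{1}$.

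The substantive case is an anti-field or anti-ghost generator $\gamma\in T_{X_{0}}[1]\oplus\mathcal{E}^{*}[1]$. For $\gamma=x^{*}_{i}$, the bracket $\{S_{0},x^{*}_{i}\}$ equals (up to sign) $\partial_{i}S_{0}=\delta(x^{*}_{i})$, via graded Leibniz and $\{x^{*}_{i},x_{j}\}=\delta_{ij}$; the remaining summands of $S_{lin}$ contribute only $F^{1}$-corrections, because their brackets with $x^{*}_{i}$ either vanish outright (anti-field/anti-ghost generators bracket trivially with $x^{*}_{i}$) or produce an $x_{i}$-derivative of a coefficient while dragging the positive-degree factor $C_{k}$ or $\beta_{j}$ along. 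For an anti-ghost $\gamma^{*}$ of degree $-(q{+}1)$ the matching summand of $S_{lin}$ is $\delta(\gamma^{*})\gamma$, whose bracket with $\gamma^{*}$ collapses via the duality $\{\gamma^{*},\gamma\}=\pm 1$ to $\pm\delta(\gamma^{*})$; all other summands contribute $F^{1}$-corrections by the same mechanism. The main obstacle throughout is the careful sign bookkeeping imposed by the graded Jacobi/Leibniz rules and the verification that every ``off-diagonal'' bracket really lands in $F^{1}$, which is ensured here by the fact that the coefficients $\delta(C^{*}_{k})$, $\delta(\beta^{*}_{j})$ lie in the subalgebra generated by anti-fields and type-$\beta$ anti-ghosts over $\mathcal{O}_{X_{0}}$; beyond that, the argument parallels \cite[Lemma $4.3$]{felder}.
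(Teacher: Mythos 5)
The paper itself does not prove this lemma: it defers to Felder--Kazhdan (Lemma $4.3$ of \cite{felder}), asserting only that the argument adapts to the type-$\beta$ setting. Your direct verification is therefore an independent route, and its skeleton is the right one: both $\{S_{lin},-\}$ and $\delta\otimes Id$ are degree-$1$ graded derivations, so once one knows both preserve $F^{1}\mathcal{O}_{\widetilde{X}}$ it suffices to compare them on an algebra generating set, and your treatment of the coordinates, the anti-fields and the anti-ghosts is correct (modulo the sign bookkeeping you explicitly defer).

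Two points need repair, both stemming from the same oversight. First, the ghost-generator case is wrong as stated: it is not true that every factor of every summand of $S_{lin}$ is either a function on $X_{0}$ or a ghost. The coefficients $\delta(C^{*}_{k})$ contain anti-fields $x^{*}_{m}$, and --- more to the point --- the coefficients $\delta(\beta^{*}_{j})$ attached to ghosts $\beta_{j}$ of degree $q\geqslant 2$ are $\mathcal{O}_{X_{0}}$-combinations of \emph{anti-ghosts} of degree $-q$. Bracketing a degree-$1$ ghost $C_{m}$ against such a summand contracts $C_{m}$ with the $C^{*}_{m}$ sitting inside $\delta(\beta^{*}_{j})$ and leaves $\pm f_{mj}\beta_{j}$, which is generically nonzero. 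So $\Phi(C_{m})$ does not vanish identically; it merely lies in $F^{2}\mathcal{O}_{\widetilde{X}}\subseteq F^{1}\mathcal{O}_{\widetilde{X}}$ because the surviving factor $\beta_{j}$ is a ghost of degree $\geqslant 2$. The congruence you need still holds, but by the ``dragging along a positive-degree factor'' mechanism you invoke in the other cases, not by identical vanishing. Second, the reduction to generators presupposes that $\Phi$ preserves $F^{1}\mathcal{O}_{\widetilde{X}}$, and this is exactly the same non-trivial point: a bracket can only remove a ghost from $\varphi\in F^{1}\mathcal{O}_{\widetilde{X}}$ by contracting it against an anti-ghost inside some $\delta(\beta^{*}_{j})$, whereupon the higher-degree ghost $\beta_{j}$ survives and the positive filtration degree does not drop. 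This deserves an explicit sentence rather than being absorbed into the word ``descend''. With these two repairs the argument is complete.
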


\begin{lemma}
\label{lemma tecnico 2}
Given an element {\small{$\varphi \in F^{q}\mathcal{O}_{\widetilde{X}}$}}, $q\geqslant 0$, then  
$$\{ \varphi, \psi\} \in F^{q+1}\mathcal{O}_{\widetilde{X}}, \quad \quad  \mbox{ for } \quad \quad  \psi \in I^{\geqslant 2}_{\widetilde{X}} \cap \mathcal{O}^{0}_{\widetilde{X}}. $$
Moreover, if in particular {\small{$\varphi \in F^{q}\mathcal{O}_{\widetilde{X}}^{0}$}}, then
$$\{ \varphi, \psi\} \in F^{q}\mathcal{O}_{\widetilde{X}}^{1},   \mbox{ for }  \psi \in \mathcal{O}_{\widetilde{X}}^{0} \quad \quad \mbox{ and } \quad \quad  \{ \varphi, \psi\} \in F^{q+1}\mathcal{O}_{\widetilde{X}}^{1} ,  \mbox{ for }  \psi \in F^{q}\mathcal{O}_{\widetilde{X}}^{0}.$$ 
\end{lemma}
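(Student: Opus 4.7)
My plan is to expand the bracket $\{\varphi,\psi\}$ into a sum of partial-derivative terms and bound the positive degree of each. Using graded Leibniz together with the elementary brackets $\{\beta^{*}_{i},\beta_{j}\}=\delta_{ij}$ and $\{x^{*}_{a},x_{b}\}=\delta_{ab}$, one writes $\{\varphi,\psi\}$ as a finite sum of terms of the form $(\pm)\partial_{X^{*}}(\varphi)\cdot\partial_{X}(\psi)$ and $(\pm)\partial_{X}(\varphi)\cdot\partial_{X^{*}}(\psi)$, where $(X,X^{*})$ runs over the dually-paired generators $(x_{a},x^{*}_{a})$ and $(\beta_{i},\beta^{*}_{i})$. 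It then suffices to estimate $\deg_{p}$ of each such summand.

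The two book-keeping facts I will use repeatedly are: (i) the derivation $\partial_{\beta_{i}}$ lowers $\deg_{p}$ by $p_{i}:=\deg(\beta_{i})$, while $\partial_{\beta^{*}_{i}}$, $\partial_{x_{a}}$ and $\partial_{x^{*}_{a}}$ leave $\deg_{p}$ unchanged; (ii) if $f\in\mathcal{O}^{0}_{\widetilde{X}}$ and one summand $f_{n,I}\beta_{I}$ of $f$ contains $\beta^{*}_{j}$ as a factor of $f_{n,I}$, then the balance $\deg(f_{n,I})+\deg(\beta_{I})=0$ forces $\deg_{p}(\beta_{I})\geqslant p_{j}+1$; analogously an $x^{*}_{a}$-factor in $f_{n,I}$ forces $\deg_{p}(\beta_{I})\geqslant 1$.

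For the first claim I run the case analysis on the four types of pairings. The terms $(\pm)\partial_{\beta^{*}_{i}}\varphi\cdot\partial_{\beta_{i}}\psi$ are controlled by $\psi\in I^{\geqslant 2}_{\widetilde{X}}$: at least one ghost field of $\psi$ survives the derivation, so $\deg_{p}(\partial_{\beta_{i}}\psi)\geqslant 1$, while $\partial_{\beta^{*}_{i}}\varphi$ still has $\deg_{p}\geqslant q$. The terms with $\partial_{x_{a}}$ or $\partial_{x^{*}_{a}}$ on $\psi$ are immediate, since neither derivation alters $\deg_{p}$ and $I^{\geqslant 2}$ is preserved, giving $\deg_{p}\geqslant q+2$. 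The delicate case is $(\pm)\partial_{\beta_{i}}\varphi\cdot\partial_{\beta^{*}_{i}}\psi$, where the hypothesis $\psi\in\mathcal{O}^{0}_{\widetilde{X}}$ becomes essential: book-keeping (ii) applied to the $\beta^{*}_{i}$ extracted from $\psi_{n,I}$ gives $\deg_{p}(\partial_{\beta^{*}_{i}}\psi)\geqslant p_{i}+1$, compensating exactly the $p_{i}$ units lost on the $\varphi$-side and yielding $\deg_{p}\geqslant (q-p_{i})+(p_{i}+1)=q+1$ for the product. This is the main obstacle of the whole lemma, and the only step where the naive estimate fails.

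The second statement follows the same four-case scheme without the $I^{\geqslant 2}$ hypothesis. Membership of $\{\varphi,\psi\}$ in $\mathcal{O}^{1}_{\widetilde{X}}$ is automatic because $\{-,-\}$ has degree $1$. For part (i) the target is only $F^{q}\mathcal{O}_{\widetilde{X}}$, and each of the four cases already produces $\deg_{p}\geqslant q$ without any appeal to (ii). For part (ii), where $\psi\in F^{q}\cap\mathcal{O}^{0}$ and the target is $F^{q+1}$, I invoke (ii) symmetrically on both $\varphi$ and $\psi$ together with $\deg_{p}\varphi,\deg_{p}\psi\geqslant q$; a short split according to whether $q\leqslant p_{i}$ or $q>p_{i}$ (and similarly for the $x^{*}_{a}$-pairings) produces $\deg_{p}\geqslant q+1$ in every case. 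As the author indicates, the full computations parallel those of \cite[Lemmas 4.3, 4.6]{felder}, with the simplification afforded by restricting to type-$\beta$ generators.
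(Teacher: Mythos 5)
Your strategy --- expanding $\{\varphi,\psi\}$ over the dually paired generators $(x_a,x_a^*)$, $(\beta_i,\beta_i^*)$ and book-keeping $\deg_p$ via your facts (i) and (ii) --- is precisely the argument of \cite[Lemmas 4.3, 4.6]{felder} to which the paper defers (the paper prints no proof of its own), and your treatment of the first claim and of the second half of the second claim is correct. There is, however, one misstatement in the first half of the second claim: the case $\partial_{\beta_i}\varphi\cdot\partial_{\beta_i^*}\psi$ does \emph{not} go through ``without any appeal to (ii)''. Differentiation by $\beta_i$ can drop $\deg_p(\varphi)$ from $q$ to $q-p_i$, and with no information on $\psi$ beyond membership in $\mathcal{O}_{\widetilde{X}}$ the product is only bounded below by $q-p_i$, which falls short of $q$ (e.g. $\varphi=M_1^*C_1\in F^1\mathcal{O}^0_{\widetilde{X}}$ and $\psi=C_1^*$ give $\{\varphi,\psi\}=\pm M_1^*\notin F^1\mathcal{O}_{\widetilde{X}}$). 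You must invoke fact (ii) on $\psi$: since $\psi\in\mathcal{O}^0_{\widetilde{X}}$, any summand of $\psi$ containing $\beta_i^*$ has $\deg_p\geqslant p_i+1$, hence $\deg_p(\partial_{\beta_i^*}\psi)\geqslant p_i+1$ and the product has $\deg_p\geqslant (q-p_i)+(p_i+1)\geqslant q$. This is exactly the ``delicate case'' you already resolved for the first claim, and it is the point where the hypothesis $\psi\in\mathcal{O}^0_{\widetilde{X}}$, rather than merely $\psi\in\mathcal{O}_{\widetilde{X}}$, is actually used. With that one correction the proof is complete.
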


The last element that we have to introduce before being able to present the construction of the extended action $\widetilde{S}$ is the cohomology complex $(\mathcal{G}_{q, r}^{\bullet}, d)$, with $r\leqslant  q$ fixed and non-negative.

\begin{definition}
\label{complesso per algoritmo}
Let $q$, $r$ be two fixed values in {\small{$\mathbb{Z}_{\geqslant0}$}}, with $r\leqslant q$. The pair {\small{$(\mathcal{G}_{q, r}^{\bullet}, d)$}} is a collection of sets and a graded map on them, where
 $$\mathcal{G}_{q, r}^{j}= \pi_{q}(F^{q}\mathcal{O}_{\widetilde{X}}^{j} \cap I_{\widetilde{X}}^{(r)})\cup \{ 0 \},$$
 for $j\leqslant q$, with $\pi_{q}$ the canonical projection {\small{$\pi_{q}: F^{q}\mathcal{O}_{\widetilde{X}} \rightarrow F^{q}\mathcal{O}_{\widetilde{X}}/ F^{q+1}\mathcal{O}_{\widetilde{X}}$}}. Thus:
 $$\mathcal{G}_{q, r}^{j} = \Big\{ \varphi =\sum_{i \in I} \varphi_{n, i}\varphi_{p, i} \in \mathcal{O}^{j}_{\widetilde{X}} : \ \deg(\varphi_{p, i}) = q, \varphi_{p, i} = \beta_{j_{1}} \cdots \beta_{j_{r}}, \ \forall i \Big\} \cup \{ 0 \}.$$
Concerning the graded map {\small{$d= \{ d^{j} \}_{j\leqslant p}$}} , with {\small{$d^{j}: \mathcal{G}_{q, r}^{j} \rightarrow \mathcal{G}_{q, r}^{j+1}\ ,$}} it is defined as  
 $$d(\varphi) = (\delta \otimes Id)(\varphi) = \sum_{i\in I} \delta(\varphi_{n, i})\varphi_{p, i},$$
 for {\small{$\varphi\in \mathcal{G}_{q, r}^{j}$}}, where $\delta$ is the coboundary operator of the fixed Tate resolution.
\end{definition}

\begin{prop}
\label{prop per complesso per algoritmo}
The pair {\small{$(\mathcal{G}_{q, r}^{\bullet}, d)$}} defines a cochain complex. 
\end{prop}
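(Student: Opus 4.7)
The plan is to verify the two conditions defining a cochain complex: first, that $d^j$ is well-defined as a map $\mathcal{G}_{q,r}^j \to \mathcal{G}_{q,r}^{j+1}$, and second, that $d^{j+1}\circ d^j = 0$.

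For well-definedness, fix an ordered basis of ghost monomials of length $r$ and total positive degree $q$; any $\varphi \in \mathcal{G}_{q,r}^j$ then admits a unique decomposition $\varphi = \sum_{I} \varphi_{n, I}\,\beta_{I}$ with $\varphi_{n,I} \in \Sym_{\mathcal{O}_{X_{0}}}(T_{X_{0}}[1] \oplus \mathcal{E}^{*}[1])$ and $\beta_I = \beta_{j_1}\cdots\beta_{j_r}$. Applying the prescription of Definition \ref{complesso per algoritmo} gives $d(\varphi)=\sum_{I}\delta(\varphi_{n,I})\,\beta_{I}$. The positive-degree part of each summand is unchanged, so it still consists of $r$ type-$\beta$ factors of total degree $q$; since $\delta$ has ghost degree $+1$, the total ghost degree passes from $j$ to $j+1$. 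The only nontrivial point is that $\delta(\varphi_{n,I})$ must again lie in $\Sym_{\mathcal{O}_{X_{0}}}(T_{X_{0}}[1]\oplus\mathcal{E}^{*}[1])$, i.e.\ the type-$\beta$ subalgebra must be $\delta$-stable. This is built into the inductive definition of type-$\beta$ generators: by construction, $\delta$ sends each $\beta^{-q}$ to an $\mathcal{O}_{X_{0}}$-linear combination of type-$\beta$ generators of degree $-q+1$, and the Leibniz rule then extends this stability to the whole symmetric algebra. Hence $d(\varphi) \in \mathcal{G}_{q,r}^{j+1}$.

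For $d^2 = 0$, I would observe that on the decomposition above, $d$ is literally the tensor product $\delta \otimes \operatorname{Id}$, where $\operatorname{Id}$ is the identity on the $\beta_I$-factor. Consequently,
$$d^{2}(\varphi) = \sum_{I} \delta^{2}(\varphi_{n, I})\,\beta_{I} = 0,$$
since $\delta^{2}=0$ on the Tate resolution, and this identity survives the restriction to the $\delta$-stable type-$\beta$ subalgebra established above. Equivalently, one may invoke Lemma \ref{lemma tecnico 1}: the operator $d$ coincides with the reduction of $\Phi = \{S_{lin}, -\}$ modulo $F^{q+1}\mathcal{O}_{\widetilde{X}}$ when applied to $\mathcal{G}_{q,r}^\bullet$, and $\Phi^{2}=0$ follows from $\{S_{lin},S_{lin}\}\in F^{2}\mathcal{O}_{\widetilde{X}}$ combined with the graded Jacobi identity.

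The statement is therefore essentially a direct consequence of two facts already in hand, $\delta^{2}=0$ and the $\delta$-stability of the type-$\beta$ generators, so no serious obstacle appears. The only point that requires genuine care is the bookkeeping that ensures $d$ preserves both the positive-degree value $q$ and the length $r$ of the ghost monomial; this is automatic because $d$ acts trivially on the positive-degree factors, which is precisely the reason Definition \ref{complesso per algoritmo} was phrased in terms of fixed $q$ and $r$.
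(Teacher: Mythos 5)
Your proof is correct and follows essentially the same route as the paper: well-definedness comes from $d=\delta\otimes\operatorname{Id}$ preserving both the positive degree $q$ and the number $r$ of positively-graded factors, and $d^{2}=0$ is inherited from $\delta^{2}=0$. The one point you make explicit that the paper leaves implicit --- that the type-$\beta$ subalgebra is $\delta$-stable by the inductive definition of type-$\beta$ generators, so that $\delta(\varphi_{n,I})$ stays in $\Sym_{\mathcal{O}_{X_{0}}}(T_{X_{0}}[1]\oplus\mathcal{E}^{*}[1])$ --- is a worthwhile addition, but it does not change the argument.
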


\begin{proof}
The statement follows by noticing that, because the operator $\delta$ acts only on the generators of non-positive degree, the map $d$, applied to an element $\varphi$, not only preserves its positive degree but also the number of positively-graded generators appearing in $\varphi$. This allows to conclude that $d$ is well defined as map from {\small{$\mathcal{G}_{q, r}^{j}$}} to {\small{$\mathcal{G}_{q, r}^{j+1}$}}. All the remaining requirements on $d$, among which the condition of defining a differential operator, that is, of satisfying {\small{$d^{j+1}\circ d^{j}\equiv 0$}}, can be immediately deduced from the analogous properties of the coboundary operator $\delta$. 
\end{proof}
\noindent
We now have everything needed to state the main theorem on the existence of an extended action {\small{$\widetilde{S} \in \mathcal{O}_{\widetilde{X}}$}}, which solves the classical master equation on {\small{$\mathcal{O}_{\widetilde{X}}$}}, for {\small{$\widetilde{X}$}} the already determined extended configuration space.

\begin{theorem}
\label{th_action, existence}
Given a gauge theory {\small{$(X_{0}, S_{0})$}}, with {\small{$X_{0}$}} a real affine variety and {\small{$S_{0}\in \mathcal{O}_{X_{0}}$}}, and a Tate resolution {\small{$(A, \delta)$}} of {\small{$J(S_{0})$}} on {\small{$\mathcal{O}_{X_{0}}$}} satisfying condition \eqref{condizione Tate resolution}, let {\small{$\widetilde{X}$}} be the extended configuration space determined by the type-$\beta$ generators of {\small{$A$}}. If the induced cohomology complex {\small{$(\mathcal{G}_{q, r}, d)$}} is such that
\begin{equation}
\label{eq: condizione sufficiente per esistenza ext action}
H^{j}(\mathcal{G}_{q, r}, d) = 0, \quad \quad \mbox{ for } j\leqslant q, 
\end{equation}
there exists a function {\small{$\widetilde{S} \in \mathcal{O}^{0}_{\widetilde{X}}$}} which solves the classical master equation on {\small{$\mathcal{O}_{\widetilde{X}}$}}, that is, such that {\small{$\{\widetilde{S}, \widetilde{S}\}=0$}}, and which satisfies {\small{$\widetilde{S}|_{X_{0}}=S_{0}$}}, with
$$\widetilde{S} =S_{lin}:= S_{0} + \sum_{k \in K} \delta(C^{*}_{k}) C_{k} + \sum_{j\in J} \delta(\beta^{*}_{j}) \beta_{j} \quad (\mbox{mod } I_{\widetilde{X}}^{\geqslant 2}).$$
\end{theorem}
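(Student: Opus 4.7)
The plan is to build $\widetilde{S}$ as the limit of an inductive sequence $\{\widetilde{S}^{(n)}\}_{n\geqslant 1}$ of partial solutions in $\mathcal{O}_{\widetilde{X}}^{0}$, where each $\widetilde{S}^{(n)}$ satisfies $\widetilde{S}^{(n)}|_{X_{0}} = S_{0}$, $\widetilde{S}^{(n)}\equiv S_{lin}\pmod{I_{\widetilde{X}}^{\geqslant 2}}$, and $\{\widetilde{S}^{(n)},\widetilde{S}^{(n)}\}\in F^{n}\mathcal{O}_{\widetilde{X}}$. For the base case, take $\widetilde{S}^{(1)} := S_{lin}$: applying Lemma \ref{lemma tecnico 1} to $\varphi = S_{lin}$ together with $\delta(S_{0})=0$ (since $S_{0}\in A^{0}$) and $\delta^{2}(C^{*}_{k}) = \delta^{2}(\beta^{*}_{j})=0$ yields $\{S_{lin},S_{lin}\}\equiv 0 \pmod{F^{1}\mathcal{O}_{\widetilde{X}}}$, establishing the induction hypothesis at $n=1$.

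For the inductive step, set $\Omega^{(n)}:=\{\widetilde{S}^{(n)},\widetilde{S}^{(n)}\}\in F^{n}\mathcal{O}_{\widetilde{X}}^{1}$ and write its class $[\Omega^{(n)}]$ in $F^{n}/F^{n+1}$ as a finite sum of components in the cochain spaces $\mathcal{G}_{n,r}^{1}$. The graded Jacobi identity forces $\{\widetilde{S}^{(n)},\Omega^{(n)}\}=0$; decomposing $\widetilde{S}^{(n)}=S_{lin}+(\widetilde{S}^{(n)}-S_{lin})$ with $\widetilde{S}^{(n)}-S_{lin}\in I_{\widetilde{X}}^{\geqslant 2}\cap\mathcal{O}_{\widetilde{X}}^{0}$, Lemma \ref{lemma tecnico 2} confines $\{\widetilde{S}^{(n)}-S_{lin},\Omega^{(n)}\}$ to $F^{n+1}\mathcal{O}_{\widetilde{X}}$, while a degree-sensitive refinement of Lemma \ref{lemma tecnico 1} identifies the leading part of $\{S_{lin},\Omega^{(n)}\}$ with $d\,\Omega^{(n)}$ modulo $F^{n+1}$. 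Hence $[\Omega^{(n)}]$ is a $d$-cocycle in $\mathcal{G}_{n,r}^{1}$, and the hypothesis \eqref{eq: condizione sufficiente per esistenza ext action} applied with $j=1\leqslant n=q$ produces $T^{(n+1)}\in\mathcal{G}_{n,r}^{0}$ with $2\,d(T^{(n+1)})=[\Omega^{(n)}]$, where the factor $\tfrac{1}{2}$ is admissible because the base ring contains a subfield of characteristic $0$.

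Lifting $T^{(n+1)}$ arbitrarily to $F^{n}\mathcal{O}_{\widetilde{X}}^{0}\cap I_{\widetilde{X}}^{\geqslant 2}$ and defining $\widetilde{S}^{(n+1)}:=\widetilde{S}^{(n)}-T^{(n+1)}$, the graded symmetry of the bracket on degree-zero elements gives
\[
\{\widetilde{S}^{(n+1)},\widetilde{S}^{(n+1)}\}=\Omega^{(n)}-2\{\widetilde{S}^{(n)},T^{(n+1)}\}+\{T^{(n+1)},T^{(n+1)}\}.
\]
Modulo $F^{n+1}\mathcal{O}_{\widetilde{X}}$, Lemmas \ref{lemma tecnico 1} and \ref{lemma tecnico 2} reduce the middle bracket to $2\,d(T^{(n+1)})=[\Omega^{(n)}]$, cancelling $\Omega^{(n)}$, while the quadratic term lies in $F^{n+1}\mathcal{O}_{\widetilde{X}}^{1}$ by the last assertion of Lemma \ref{lemma tecnico 2}. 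Hence $\{\widetilde{S}^{(n+1)},\widetilde{S}^{(n+1)}\}\in F^{n+1}\mathcal{O}_{\widetilde{X}}$, closing the induction. Since each correction $T^{(n+1)}$ sits in $F^{n}\mathcal{O}_{\widetilde{X}}$ and only finitely many values of $n$ contribute to any given monomial, the formal sum $\widetilde{S}:=S_{lin}-\sum_{n\geqslant 1}T^{(n+1)}$ converges in the filtration topology to an element of $\mathcal{O}_{\widetilde{X}}^{0}$ with $\{\widetilde{S},\widetilde{S}\}\in\bigcap_{n}F^{n}\mathcal{O}_{\widetilde{X}}=\{0\}$ and $\widetilde{S}\equiv S_{lin}\pmod{I_{\widetilde{X}}^{\geqslant 2}}$.

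The main obstacle I expect is the delicate degree-bookkeeping needed to combine the filtration $F^{\bullet}$ (indexed by total positive ghost degree) with the $I_{\widetilde{X}}^{(\bullet)}$-decomposition (indexed by the number of positive-degree generators). In particular, one has to verify that the obstruction $[\Omega^{(n)}]$ always lands in components $\mathcal{G}_{n,r}^{1}$ with $r\geqslant 2$, so that the primitives $T^{(n+1)}$ can indeed be chosen inside $I_{\widetilde{X}}^{\geqslant 2}$ and the asserted form $\widetilde{S}\equiv S_{lin}\pmod{I_{\widetilde{X}}^{\geqslant 2}}$ is preserved throughout the iteration; and one has to upgrade Lemma \ref{lemma tecnico 1} to the sharpened statement $\{S_{lin},\varphi\}\equiv d(\varphi)\pmod{F^{n+1}\mathcal{O}_{\widetilde{X}}}$ for $\varphi\in F^{n}\mathcal{O}_{\widetilde{X}}$, which is what makes the cocycle identification at the right filtration level possible.
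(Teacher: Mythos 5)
Your overall strategy---an inductive sequence of approximations, the graded Jacobi identity to show that the obstruction $\Omega^{(n)}=\{\widetilde{S}^{(n)},\widetilde{S}^{(n)}\}$ is a $d$-cocycle, and the acyclicity hypothesis \eqref{eq: condizione sufficiente per esistenza ext action} to produce a primitive---is the same homological-perturbation scheme used in the paper. However, the issue you flag at the end as ``the main obstacle'' is not a bookkeeping detail: it is precisely the content that distinguishes this theorem from its Felder--Kazhdan counterpart, and your argument leaves it unproved. The induction closes only if at every stage the obstruction lies in $I^{\geqslant 2}_{\widetilde{X}}$, so that the primitive can be chosen in $I^{\geqslant 2}_{\widetilde{X}}\cap F^{n+1}\mathcal{O}^{0}_{\widetilde{X}}$ and the normalization $\widetilde{S}\equiv S_{lin}\pmod{I^{\geqslant 2}_{\widetilde{X}}}$ survives. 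For later stages this is propagated by the inductive invariant, but it must be seeded at the first step, and there your base case is too weak: you only establish $\{S_{lin},S_{lin}\}\equiv 0\pmod{F^{1}\mathcal{O}_{\widetilde{X}}}$, whereas the paper proves the strictly stronger statement $\{S_{lin},S_{lin}\}\in I^{\geqslant 2}_{\widetilde{X}}\cap F^{2}\mathcal{O}_{\widetilde{X}}$. If the level-one obstruction did not vanish modulo $F^{2}\mathcal{O}_{\widetilde{X}}$, the first correction $T^{(2)}$ would have to be linear in the ghosts, i.e.\ would lie in $I^{(1)}_{\widetilde{X}}$ rather than $I^{\geqslant 2}_{\widetilde{X}}$, and the asserted congruence $\widetilde{S}\equiv S_{lin}\pmod{I^{\geqslant 2}_{\widetilde{X}}}$ would already fail.

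The missing verification is exactly where the defining property of the type-$\beta$ generators enters, and your proof never invokes that property at all---yet the statement cannot be expected to hold for an arbitrary truncation of the generators of a Tate resolution, so some such input is indispensable. In the paper's computation of $\{\widetilde{S}_{\leqslant 1},\widetilde{S}_{\leqslant 1}\}$, the cross terms $\{\delta(C^{*}_{k})C_{k},\delta(\beta^{*}_{j})\beta_{j}\}$ produce a contribution $\sum_{k,j}f_{kj}\delta(C^{*}_{k})\beta_{j}$ that is linear in the ghosts; it vanishes only because the degree-$(-2)$ type-$\beta$ generators satisfy $\delta(\beta^{*,-2}_{j})=\sum_{m}f_{mj}C^{*}_{m}$ with $\sum_{m}f_{mj}\delta(C^{*}_{m})=0$ (see \eqref{eq con uso di tipo beta generators} and Remark \ref{oss: condizione per poter applicare l'algoritmo per l'azione}). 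To repair your argument you should strengthen the inductive invariant to the paper's form, namely $\widetilde{S}^{(n)}|_{X_{0}}=S_{0}$, $\widetilde{S}^{(n)}\equiv S_{lin}\pmod{I^{\geqslant 2}_{\widetilde{X}}}$ and $\{\widetilde{S}^{(n)},\widetilde{S}^{(n)}\}\in I^{\geqslant 2}_{\widetilde{X}}\cap F^{n+1}\mathcal{O}_{\widetilde{X}}$, and carry out the explicit computation of $\{S_{lin},S_{lin}\}$ term by term to establish it in the base case. The rest of your inductive step---identifying $\{S_{lin},-\}$ with $\delta\otimes\mathrm{Id}$ modulo higher filtration via Lemma \ref{lemma tecnico 1} and controlling the nonlinear contributions via Lemma \ref{lemma tecnico 2}---then matches the paper's proof.
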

\noindent
Once again, the statement of this theorem has to be compared with \cite[Theorem $4.5$]{felder}, the main difference lying in having selected only type-$\beta$ generators in the Tate resolution. Hence condition (\ref{eq: condizione sufficiente per esistenza ext action}) has to be explicitly imposed while, if we consider a complete (eventually infinite) Tate resolution, the vanishing of the cohomology groups {\small{$H^{j}(\mathcal{G}_{q, r}, d)$}} for {\small{$j\leqslant q$}} is a direct consequence of the Tate resolution defining an acyclic complex. However, this extra requirement is only a {\em sufficient} condition to ensure the existence of {\small{$\widetilde{S}$}} (cf. Remark \ref{oss: condizione per poter applicare l'algoritmo per l'azione}). 

\begin{proof}
To prove the existence of an action {\small{$\widetilde{S}$}} as claimed in the statement, we present an algorithm to construct it. This method determines {\small{$\widetilde{S}$}} via a sequence of approximations obtained by introducing terms of increasing positive degree. As first step of this algorithm, we define {\small{$\widetilde{S}_{\leqslant 1}$}} as follows:
$$\widetilde{S}_{\leqslant 1}= S_{0} + \sum_{k \in K} \delta(C^{*}_{k}) C_{k}  + \sum_{j\in J} \delta(\beta^{*}_{j}) \beta_{j}\ ,$$
where {\small{$\{ C_{i} \}$}} generate {\small{$[\mathcal{E}]^{1}$, $\{ \beta_{j} \}$}} collectively denote the generators of {\small{$[\mathcal{E}]^{q}$}}, for {\small{$q>1$}}, and {\small{$\{ C^{*}_{i}\}$}},{\small{ $\{ \beta^{*}_{j} \}$}} are the dual generators of {\small{$\{ C_{i} \}$}} and {\small{$\{ \beta_{j} \}$}} respectively. Then we compute the quantity {\small{$\big\{ \widetilde{S}_{\leqslant 1}, \widetilde{S}_{\leqslant 1}\big\}$}}. If this quantity is zero, {\small{$\widetilde{S}_{\leqslant 1}$ }}is a solution of the classical master equation on {\small{$\mathcal{O}_{\widetilde{X}}$}} and the algorithm stops. Indeed, by definition {\small{$\widetilde{S}_{\leqslant 1}$}} automatically satisfies the other two required properties, that is, 
$$\widetilde{S}_{\leqslant 1}|_{X_{0}} = S_{0},  \quad \quad \quad \mbox{ and } \quad \quad \quad \widetilde{S}_{\leqslant 1} = S_{lin}\quad (\mbox{mod } I_{\widetilde{X}}^{\geqslant 2}).$$
Hence we define {\small{$ \widetilde{S}:=\widetilde{S}_{\leqslant 1}$}} and conclude the construction. Otherwise, if {\small{$\big\{ \widetilde{S}_{\leqslant 1}, \widetilde{S}_{\leqslant 1}\big\} \neq 0$}}, we have to apply the generic step of the algorithm for {\small{$q=1$}}, to obtain the approximation of the extended action up to positive degree $2$. However, in order to implement this step of the algorithm, we first have to notice that the approximated action {\small{$\widetilde{S}_{\leqslant 1}$}} satisfies the following properties:
\begin{enumerate}
\item $\widetilde{S}_{\leqslant 1}|_{X_{0}} = S_{0}$;
\vspace{1.2mm}
\item $\widetilde{S}_{\leqslant 1} = S_{lin}, \quad $ (mod $I_{\widetilde{X}}^{\geqslant 2}$);
\vspace{1.2mm}
\item $\{ \widetilde{S}_{\leqslant 1}, \widetilde{S}_{\leqslant 1}\} \in  I_{\widetilde{X}}^{\geqslant 2} \cap F^{2}\mathcal{O}^{0}_{\widetilde{X}}$;
\end{enumerate}
where, being in degree $2$, {\small{$I_{\widetilde{X}}^{\geqslant 2} \cap F^{2}\mathcal{O}^{0}_{\widetilde{X}}$}} happens to coincide with {\small{$I_{\widetilde{X}}^{\geqslant 2}$}}. While the first two properties follows immediately from the explicit expression of the function {\small{$\widetilde{S}_{\leqslant 1}$}}, the last condition can be verified by an explicit computation:
\begin{multline}
\label{computation approx action1}
\{ \widetilde{S}_{\leqslant 1}, \widetilde{S}_{\leqslant 1}\} =  \  2 \sum_{k} \big\{ S_{0}, \delta(C^{*}_{k}) C_{k} \big\}  + \sum_{k, l} \big\{ \delta(C^{*}_{k}) C_{k}, \delta(C^{*}_{l}) C_{l} \big\} \\
+ 2\sum_{k, j} \big\{ \delta(C^{*}_{k}) C_{k}, \delta(\beta^{*}_{j}) \beta_{j} \big\} +\sum_{j, m} \big\{ \delta(\beta^{*}_{j}) \beta_{j}, \delta(\beta^{*}_{m}) \beta_{m} \big\},  
\end{multline}
where we use the fact that, due to the definition of the Poisson bracket, the only non-trivial contribution to the above expression involving the initial action {\small{$S_{0}$}} is the one containing the terms {\small{$\delta(C^{*}_{k})$}}, that is, the only summands that depend on the anti-fields {\small{$x^{*}_{i}$}}. We consider separately the different components of the sum in (\ref{computation approx action1}). Concerning the first term, we have that, being {\small{$\delta(C^{*}_{k}) = \sum_{m}  g_{km}(x)x^{*}_{m}$}}, for certain {\small{$g_{km} \in \mathcal{O}_{X_{0}}$}}, then
$$\sum_{k} \big\{ S_{0}, \delta(C^{*}_{k}) C_{k} \big\} = \sum_{k, m} \delta(x^{*}_{m})g_{km}(x)C_{k} = \sum_{k} \delta(\delta(C^{*}_{k}))C_{k}=0,$$
where we use the fact that $\delta$ is a coboundary operator and that, as a consequence of the algorithmic construction of $\delta$ as differential in the Tate resolution of the Jacobian ring {\small{$J(S_{0})$}}, it holds the following identity
$$\{S_{0}, x^{*}_{m} \}= \partial_{m}S_{0} = \delta(x^{*}_{m}).$$
Once again recalling the explicit expression of {\small{$\delta(C^{*}_{k})$}}, we deduce that the second summand in (\ref{computation approx action1}) satisfies
$$\sum_{k, l} \big\{ \delta(C^{*}_{k}) C_{k}, \delta(C^{*}_{l}) C_{l} \big\} = - \sum_{k, l} C_{k} \big\{ \delta(C^{*}_{k}), \delta(C^{*}_{l}) \big\}  C_{l} = 0 \quad \quad (\mbox{mod } I_{\widetilde{X}}^{\geqslant 2}).$$
To deduce that also the last two summands in (\ref{computation approx action1}) contribute to the sum by elements in $I_{\widetilde{X}}^{\geqslant 2}$, we use the characterization of type-$\beta$ generators, which allows to deduce that the only terms of type {\small{$\delta(\beta^{*}_{j})$}} depending on the anti-ghost fields {\small{$C^{*}_{i}$}} are the ones referring to generators {\small{$\beta^{*, -2}_{j}$}} of degree $-2$. In particular, these generators satisfy 
$$\delta(\beta^{*, -2}_{j}) = \sum_{m} f_{mj}C^{*}_{m} \quad \quad \mbox{ with }  \quad \quad \sum_{m} f_{mj}\delta(C^{*}_{m})=0$$
for {\small{$f_{mj} \in \mathcal{O}_{X_{0}}$}}. Therefore, we deduce the following series of equalities:
\begin{equation}
\label{eq con uso di tipo beta generators}
\begin{array}{l}
\sum_{k, j} \big\{ \delta(C^{*}_{k}) C_{k}, \delta(\beta^{*, -2}_{j}) \beta^{1}_{j} \big\}\\
[1.5ex]
\quad \quad =  \sum_{k, j,m} \delta(C^{*}_{k}) \big\{  C_{k}, f_{mj} C^{*}_{m} \big\} \beta^{1}_{j} -
\sum_{k, j} C_{k}  \big\{ \delta(C^{*}_{k}), \delta(\beta^{*, -2}_{j})\big\} \beta^{1}_{j} \\
[1.5ex] 
\quad \quad = \sum_{k,j}  f_{kj}\delta(C^{*}_{k}) \beta^{1}_{j}=  0 \hspace{30mm} (\mbox{mod } I_{\widetilde{X}}^{\geqslant 2}). 
\end{array}
\end{equation}
The specific characteristic that distinguishes the type-$\beta$ generators enters these series of equalities at the last step, allowing to deduce that the last sum is zero. A completely analogous series of equalities can be written about the last summand in (\ref{computation approx action1}), allowing to draw the conclusion that 
$$\{ \widetilde{S}_{\leqslant 1}, \widetilde{S}_{\leqslant 1}\} = 0 \hspace{20mm} (\mbox{mod } I_{\widetilde{X}}^{\geqslant 2}),$$
and therefore, all the three properties claimed for the approximated action {\small{$\widetilde{S}_{\leqslant 1}$}} are verified. Thus the generic step of the algorithm can be applied to the approximated action {\small{$\widetilde{S}_{\leqslant 1}$}} to determine {\small{$\widetilde{S}_{\leqslant 2}$}}, i.e., the approximated action up to positive degree {\small{$2$}}.\\
\\
\noindent
{\em The generic step.} To construct an approximation of the action {\small{$\widetilde{S}$}} up to degree {\small{$q+1$}} in the ghost fields, consider the approximation {\small{$\widetilde{S}_{\leqslant q}$}}, obtained in the previous step of the algorithm. Because of the inductive hypothesis, the approximated action {\small{$\widetilde{S}_{\leqslant q}$}} fulfils the following conditions:
\begin{enumerate}
\item $\widetilde{S}_{\leqslant q}|_{X_{0}} = S_{0}$;
\vspace{1mm}
\item $\widetilde{S}_{\leqslant q} = S_{lin}, \quad $ (mod $I_{\widetilde{X}}^{\geqslant 2}$);
\vspace{1mm}
\item $\{ \widetilde{S}_{\leqslant q}, \widetilde{S}_{\leqslant q}\} \in  I_{\widetilde{X}}^{\geqslant 2} \cap F^{q+1}\mathcal{O}^{0}_{\widetilde{X}}$.
\end{enumerate}
Moreover we can verify that 
$$\big\{ \widetilde{S}_{\leqslant q}, \widetilde{S}_{\leqslant q}\big\} \in \bigoplus_{r=2}^{q+1} \Ker(\mathcal{G}^{1}_{q+1, r}, d) \quad \quad (\mbox{mod } \ F^{q+2}\mathcal{O}_{\widetilde{X}}).$$
Indeed, this conclusion can be easily drawn from the following series of equalities:
$$0= \big\{ \widetilde{S}_{\leqslant q},\big\{ \widetilde{S}_{\leqslant q}, \widetilde{S}_{\leqslant q}\big\}\big\} =  \big\{ S_{lin},\big\{ \widetilde{S}_{\leqslant q}, \widetilde{S}_{\leqslant q}\big\}\big\} = (\delta \otimes Id)\big\{ \widetilde{S}_{\leqslant q}, \widetilde{S}_{\leqslant q}\big\}.$$
While the first equality is consequence of the graded Jacobi identity satisfied by the Poisson bracket, the second passage results from applying Lemma \ref{lemma tecnico 2}: indeed, by inductive hypothesis the quantity {\small{$\{ \widetilde{S}_{\leqslant q}, \widetilde{S}_{\leqslant q}\big\}$}} is an element in {\small{$F^{q+1}\mathcal{O}_{\widetilde{X}}$}} as well as {\small{$\widetilde{S}_{\leqslant q} - S_{lin}$}} belongs to {\small{$I_{\widetilde{X}}^{\geqslant 2} \cap \mathcal{O}_{\widetilde{X}}^{0}$}} and hence the contribution coming from the Poisson bracket computed on {\small{$\widetilde{S}_{\leqslant q} - S_{lin}$}} and {\small{$\{ \widetilde{S}_{\leqslant q}, \widetilde{S}_{\leqslant q}\big\}$}} has positive degree at least {\small{$q+2$}}, that is, is zero modulo {\small{$F^{q+2}\mathcal{O}_{\widetilde{X}}$}}. Finally, the last passage of the above equation follows from Lemma \ref{lemma tecnico 1}. \\
The hypothesis of trivial cohomology for the cohomology complex {\small{$(\mathcal{G}_{q, r}, d)$}} ensures the existence of an element {\small{$\nu \in I^{\geqslant 2}_{\widetilde{X}}\cap F^{q+1}\mathcal{O}^{0}_{\widetilde{X}}$}} such that
\begin{equation}
\label{eq: def di nu}
(\delta \otimes Id)(\nu) = -\frac{1}{2}\big\{ \widetilde{S}_{\leqslant q}, \widetilde{S}_{\leqslant q}\big\} \quad \quad \quad (\mbox{mod } \ F^{q+2}\mathcal{O}_{\widetilde{X}}) .
\end{equation}
Hence, the approximation {\small{$\widetilde{S}_{\leqslant q+1}$}} up to positive degree {\small{$q+1$}} is defined to be
$$\widetilde{S}_{\leqslant q+1}:= \widetilde{S}_{\leqslant q}+ \nu, $$
which satisfies the three properties required for the inductive procedure: indeed, while the first two properties immediately follows from the inductive hypothesis on {\small{$\widetilde{S}_{\leqslant q}$}} and from the element $\nu$ belonging to {\small{$I^{\geqslant 2}_{\widetilde{X}}\cap F^{q+1}\mathcal{O}^{0}_{\widetilde{X}}$}}, the third one can be checked as follows:
$$\big\{ \widetilde{S}_{\leqslant q+1}, \widetilde{S}_{\leqslant q+1}\big\} =  \big\{ \widetilde{S}_{\leqslant q}, \widetilde{S}_{\leqslant q}\big\} + 2 \big\{ S_{lin}, \nu \big\} = 0 \quad \quad (\mbox{mod } \ F^{q+2}\mathcal{O}_{\widetilde{X}})$$
where we first use Lemma \ref{lemma tecnico 2}, which ensures that the contribution coming from the Poisson bracket between {\small{$\widetilde{S}_{\leqslant q} - S_{lin}$}} and $\nu$ belongs to {\small{$F^{q+2}\mathcal{O}_{\widetilde{X}}$}} as well as the quantity {\small{$\{ \nu, \nu \}$}}, while the last equality is a direct consequence on the way how the element $\nu$ has been chosen. Therefore, the approximated action {\small{$\widetilde{S}_{\leqslant q+1}$}} satisfies all the properties required for the inductive construction, allowing to conclude the proof.
\end{proof}

\begin{oss}
 \label{oss: condizione per poter applicare l'algoritmo per l'azione}
 The requirement in (\ref{eq: condizione sufficiente per esistenza ext action}) concerning the vanishing of the cohomology groups {\small{$H^{j}(\mathcal{G}_{q, r}, d)$}} for {\small{$j\leqslant q$}} has been used in (\ref{eq: def di nu}) to ensure the possibility of defining the element {\small{$\nu$}} as an element that describes the cocycle {\small{$\{ \widetilde{S}_{\leqslant q}, \widetilde{S}_{\leqslant q}\}$}} as a coboundary element. However, the vanishing of all the cohomology groups (cf. in \cite{felder}) is a {\em sufficient} condition: to define {\small{$\nu$}} it is not necessary to have that every cocycle of degree {\small{$j$}} in the cohomology complex {\small{$(\mathcal{G}_{q, r}, d)$}}, with {\small{$j\leqslant q$}}, is also a coboundary element but it is enough to have that this property holds for the cocycle {\small{$\{ \widetilde{S}_{\leqslant q}, \widetilde{S}_{\leqslant q}\}$}} of interest. Next to it, the peculiar form of the selected type-$\beta$ generators plays a role in (\ref{eq con uso di tipo beta generators}): indeed, the condition of being type-$\beta$ generators is exactly the property that ensures the quantity in Equation (\ref{eq con uso di tipo beta generators}) being zero, it is the minimal request that accords the possibility of applying the algorithmic construction of an approximated action up to positive degree $2$ starting with the linear approximation {\small{$\widetilde{S}_{\leqslant 1}$}}. Hence it is the algorithmic procedure that tells us how to select the minimal number of generators to be able to proceed with the construction of an action $\widetilde{S}$, that is, the type-$\beta$ generators.
\end{oss}

Also in this context it is possible to introduce a notion of {\em gauge equivalence} of extended actions.

\begin{definition}
\label{gauge equivalent felder}
Let {\small{$\widetilde{X}$}} be an extended configuration space for a gauge theory {\small{$(X_{0}, S_{0})$}}. Given the Lie algebra {\small{$\mathfrak{g}_{\widetilde{X}}= \mathcal{O}_{\widetilde{X}}^{-1} \cap I_{\widetilde{X}}^{\geqslant 2}$}}, the {\em group of gauge equivalences} {\small{$G(\widetilde{X})$}} is defined as the following group of Poisson automorphisms: 
 $$G(\widetilde{X}) = exp(ad(\mathfrak{g}_{\widetilde{X}})).$$
\end{definition}
 
 \begin{theorem}
 \label{th_action, uniqueness}
 Let {\small{$(\widetilde{X}, \widetilde{S})$}} be an extended theory associated to a Tate resolution {\small{$(A, \delta)$}}, corresponding to a gauge theory {\small{$(X_{0}, S_{0})$}}. Suppose that the cohomology complex {\small{$(\mathcal{G}_{q, r}, d)$}} has vanishing cohomology groups {\small{$H^{j}(\mathcal{G}_{q, r}, d)$}} for {\small{$j\leqslant q$}}. If {\small{$\widetilde{S}^{\prime}\in \mathcal{O}_{\widetilde{X}}$}} is another solution of the classical master equation on {\small{$\mathcal{O}_{\widetilde{X}}$}}, with
$$\widetilde{S}^{\prime} = \widetilde{S} = S_{0} + \sum_{i} \delta(C^{*}_{i})C_{i} + \sum_{j \in J} \delta(\beta^{*}_{j}) \beta_{j} \quad (mod \ I^{\geqslant 2}_{\widetilde{X}})$$
then there exists a gauge equivalence {\small{$g \in G(\widetilde{X})$}} such that {\small{$\widetilde{S}^{\prime} = g \cdot \widetilde{S}.$}}
 \end{theorem}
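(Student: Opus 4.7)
The plan is to mirror the inductive algorithm of Theorem \ref{th_action, existence}, but now the obstruction to be killed at each stage is the discrepancy $\widetilde{S}'-\widetilde{S}$ rather than the failure of the classical master equation. Concretely, I will build an element $X\in\mathfrak{g}_{\widetilde{X}}=\mathcal{O}_{\widetilde{X}}^{-1}\cap I_{\widetilde{X}}^{\geqslant 2}$ as a sum $X=\sum_{q\geqslant 2}X_{(q)}$ with $X_{(q)}\in F^{q}\mathcal{O}_{\widetilde{X}}^{-1}\cap I_{\widetilde{X}}^{\geqslant 2}$, defining partial gauge transforms $\widetilde{S}^{(q)}:=\exp(\mathrm{ad}(X_{(2)}+\dots+X_{(q)}))\cdot\widetilde{S}$, such that $\widetilde{S}^{(q)}\equiv \widetilde{S}'\pmod{F^{q+1}\mathcal{O}_{\widetilde{X}}}$. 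At the end I set $g:=\exp(\mathrm{ad}(X))\in G(\widetilde{X})$; the sum and exponential make sense because, for each fixed $N$, only finitely many $X_{(q)}$ contribute to $g\cdot\widetilde{S}$ modulo $F^{N}\mathcal{O}_{\widetilde{X}}$.

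For the base case note that, by hypothesis, $\widetilde{S}'-\widetilde{S}\in I_{\widetilde{X}}^{\geqslant 2}\subseteq F^{2}\mathcal{O}_{\widetilde{X}}$, so $\widetilde{S}^{(1)}:=\widetilde{S}$ already satisfies $\widetilde{S}^{(1)}\equiv \widetilde{S}'\pmod{F^{2}\mathcal{O}_{\widetilde{X}}}$. For the inductive step, suppose $\Delta:=\widetilde{S}'-\widetilde{S}^{(q-1)}\in F^{q}\mathcal{O}_{\widetilde{X}}^{0}\cap I_{\widetilde{X}}^{\geqslant 2}$. Both $\widetilde{S}'$ and $\widetilde{S}^{(q-1)}$ solve the classical master equation (the latter because $G(\widetilde{X})$ acts by Poisson automorphisms and hence preserves the CME). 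Subtracting $\{\widetilde{S}',\widetilde{S}'\}=\{\widetilde{S}^{(q-1)},\widetilde{S}^{(q-1)}\}=0$ gives $2\{\widetilde{S}^{(q-1)},\Delta\}+\{\Delta,\Delta\}=0$; by Lemma \ref{lemma tecnico 2} both $\{\widetilde{S}^{(q-1)}-S_{lin},\Delta\}$ and $\{\Delta,\Delta\}$ lie in $F^{q+1}\mathcal{O}_{\widetilde{X}}$, and by Lemma \ref{lemma tecnico 1} one concludes that $(\delta\otimes\mathrm{Id})(\Delta)\equiv 0\pmod{F^{q+1}\mathcal{O}_{\widetilde{X}}}$. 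Decomposing $\Delta$ mod $F^{q+1}$ according to the number of positively-graded generators, each component is a $0$-cocycle in some $\mathcal{G}_{q,r}^{\bullet}$ with $r\geqslant 2$.

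Invoking the hypothesis $H^{j}(\mathcal{G}_{q,r},d)=0$ for $j\leqslant q$ at $j=0$, each such cocycle is a coboundary, producing $X_{(q)}\in\mathcal{G}_{q,r}^{-1}\subseteq \mathcal{O}_{\widetilde{X}}^{-1}\cap I_{\widetilde{X}}^{\geqslant 2}\cap F^{q}\mathcal{O}_{\widetilde{X}}$ with $(\delta\otimes\mathrm{Id})(X_{(q)})\equiv \Delta\pmod{F^{q+1}\mathcal{O}_{\widetilde{X}}}$ (up to an irrelevant sign from graded skew-symmetry of the bracket). Then expanding $\widetilde{S}^{(q)}:=\exp(\mathrm{ad}(X_{(q)}))\cdot\widetilde{S}^{(q-1)}$, the leading correction is $\{X_{(q)},\widetilde{S}^{(q-1)}\}$, which by Lemma \ref{lemma tecnico 1} coincides with $(\delta\otimes\mathrm{Id})(X_{(q)})\equiv\Delta\pmod{F^{q+1}\mathcal{O}_{\widetilde{X}}}$ (the remaining contributions $\{X_{(q)},\widetilde{S}^{(q-1)}-S_{lin}\}$ and $\frac{1}{2}\{X_{(q)},\{X_{(q)},\widetilde{S}^{(q-1)}\}\}+\cdots$ all lie in $F^{q+1}\mathcal{O}_{\widetilde{X}}$ by repeated application of Lemma \ref{lemma tecnico 2}). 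Hence $\widetilde{S}^{(q)}\equiv \widetilde{S}'\pmod{F^{q+1}\mathcal{O}_{\widetilde{X}}}$, closing the induction and producing the desired $g\cdot\widetilde{S}=\widetilde{S}'$.

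The main obstacle will be the careful bookkeeping at the inductive step: verifying that the cocycle condition for $\Delta$ genuinely holds in the complex $\mathcal{G}_{q,r}^{\bullet}$ as defined (not just in some ambient quotient) and that $\{X_{(q)},\widetilde{S}^{(q-1)}\}$ reduces to $(\delta\otimes\mathrm{Id})(X_{(q)})$ with the correct error estimate, both of which rest on the interplay between the filtration $F^{\bullet}\mathcal{O}_{\widetilde{X}}$ and the ideals $I_{\widetilde{X}}^{\geqslant 2}$ together with the technical Lemmas \ref{lemma tecnico 1} and \ref{lemma tecnico 2}. Once these are established, the algorithm is entirely parallel to the existence proof and produces $X\in\mathfrak{g}_{\widetilde{X}}$ and the gauge equivalence $g=\exp(\mathrm{ad}(X))$.
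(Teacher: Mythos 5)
Your argument is correct and is essentially the proof the paper intends: the paper itself defers to Felder--Kazhdan's Theorem 4.5, and its subsequent remark confirms that the entire content is precisely your inductive step, namely that $\widetilde{S}'-\widetilde{S}^{(q-1)}$ modulo $F^{q+1}\mathcal{O}_{\widetilde{X}}$ is a degree-$0$ $d$-cocycle in $(\mathcal{G}_{q,r},d)$ and hence a coboundary, whose primitive in $\mathcal{O}_{\widetilde{X}}^{-1}\cap I_{\widetilde{X}}^{\geqslant 2}$ drives the gauge transformation. The only point worth tightening is that Lemma \ref{lemma tecnico 2} is stated for $\psi\in\mathcal{O}_{\widetilde{X}}^{0}$ whereas your $X_{(q)}$ has ghost degree $-1$; the same filtration estimate holds for any element of $I_{\widetilde{X}}^{\geqslant 2}$ regardless of ghost degree, but this mild extension of the lemma should be stated explicitly.
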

 
Because the proof of \cite[Theorem $4.5$]{felder} can be easily adapted to this context, we refer to that pages for further details.  
 
 \begin{oss}
Once again, the requirement that the cohomology groups {\small{$H^{j}(\mathcal{G}_{q, r}, d)$}} vanish for {\small{$j\leqslant q$}} is a sufficient but not necessary condition to draw the conclusion claimed in Theorem \ref{th_action, uniqueness}. Indeed, to complete the proof of the statement it is enough that the cocycle {\small{$\widetilde{S}^{\prime} - \widetilde{S}$}} (mod {\small{$F^{p+1}\mathcal{O}_{\widetilde{X}}$}}) is in particular also a coboundary element in the cohomology complex {\small{$(\mathcal{G}_{q, r}, d)$}}, for any {\small{$p\geqslant 2$}}. The condition of the vanishing of the cohomology groups {\small{$H^{j}(\mathcal{G}_{q, r}, d)$}} is always satisfied when we consider all the generators introduced in a Tate resolution, which by definition is required to be acyclic. Having selected only the type-$\beta$ generators, this vanishing condition had to be explicitly inserted to ensure the validity of the statement.
 \end{oss}

\section{Application to a $U(2)-$matrix model}
\label{Section: Extended varieties for a matrix model}
\noindent
For a matrix model {\small{$(X_{0}, S_{0})$}} with a {\small{$U(2)$}}-gauge symmetry, we describe in detail how to implement the BV construction explained in Section \ref{Section: Construction of extended varieties}, arriving to determine the minimal extended theory {\small{$(\widetilde{X}, \widetilde{S})$}} corresponding to {\small{$(X_{0}, S_{0})$}}. Despite of its low dimension, this model turns out to be surprisingly rich and it gives interesting insights for the analysis of matrix models of higher order. Note that an application of the BV formalism to a $U(2)$-matrix model can also be found in \cite{beringgrosse}, where a different approach is followed and different goals are pursued.
 
 \subsection{A $\mathbf{U(2)}$-matrix model}
 Let {\small{$(X_{0}, S_{0})$}} be a gauge theory where the configuration space {\small{$X_{0}$}} is the real affine variety of $2 \times 2$ self-adjoint matrices, i.e.,
 $$X_{0}= \left\lbrace M \in M_{2}(\mathbb{C}): M^{*} = M \right\rbrace, $$
 and where the action functional {\small{$S_{0}: X_{0}\rightarrow \mathbb{R}$}} is supposed to be a regular function on $X_{0}$, that is, {\small{$S_{0} \in \mathcal{O}_{X_{0}}$}}, invariant under the action {\small{$F: \mathcal{G} \times X_{0} \rightarrow X_{0}$}} of the gauge group {\small{$\mathcal{G} = U(2)$}} on {\small{$X_{0}$}} by conjugation:
$$F(U, M) = UMU^{*},$$
for {\small{$U \in U(2)$}} and {\small{$M \in X_{0}$}}. The implementation of the BV construction on the pair {\small{$(X_{0}, S_{0})$}} requires an explicit expression of the action {\small{$S_{0}$}} in terms of the coordinates on {\small{$X_{0}$}}. Thus we fix a basis for {\small{$X_{0}$}} given by the Pauli matrices (together with the identity matrix):
 $$
\sigma_{1}= \begin{pmatrix}
		    0 & 1\\
		    1 & 0
	\end{pmatrix}
, \quad 
\sigma_{2}= \begin{pmatrix}
		    0 & -i\\
		    i & 0
		\end{pmatrix}
, \quad 
\sigma_{3}= \begin{pmatrix}
		    1 & 0\\
		    0 & -1
		\end{pmatrix}	 
, \quad
\sigma_{4}= \begin{pmatrix}
		    1 & 0\\
		    0 & 1
		\end{pmatrix}.
$$
Hence {\small{$X_{0}$}} is isomorphic to a $4$-dimensional real vector space generated by four independent initial fields:
$$X_{0} \cong \langle M_{1}, M_{2}, M_{3}, M_{4} \rangle_{\mathbb{R}},$$
with {\small{$\{ M_{a}\}$}}, for $a=1, \dots, 4$, the dual basis of {\small{$\{ \sigma_{a}\}$}}, $a=1, \dots, 4$. In this set of coordinates, the ring of regular functions on {\small{$X_{0}$}} is the ring of polynomials in the real variables {\small{$M_{a}$}}, {\small{$\mathcal{O}_{X_{0}} = \Pol_{\mathbb{R}}(M_{a})$}}, and the most generic form for a functional {\small{$S_{0}$}} on {\small{$X_{0}$}} that is invariant under the adjoint action of the gauge group {\small{$U(2)$}} is as symmetric polynomial in the eigenvalues {\small{$\lambda_{1}, \lambda_{2}$}} of the variable {\small{$M \in X_{0}$}} or, equivalently, as polynomial in the symmetric elementary polynomials $a_{1} = \lambda_{1} + \lambda_{2}$ and $a_{2} = \lambda_{1}\lambda_{2}$. Therefore, the action $S_{0}$ has the following form in terms of the coordinates $M_{a}$:
\begin{equation}
S_{0} = \sum_{k=0}^{r} \mbox{ }(M_{1}^2 + M_{2}^2 + M_{3}^2)^k \mbox{ } g_{k}(M_4)
\label{S_0 generale}
\end{equation}
with $g_{k}(M_{4})\in \Pol_{\mathbb{R}}(M_4)$ and for
$$ \lambda_i= M_{4}\pm \sqrt{M_{1}^2+M_{2}^2+M_{3}^2},  \quad a_{1}= 2M_4, \quad a_{2}= M^{2}_{4}- (M_{1}^2+M_{2}^2+M_{3}^2)\ .$$
We immediately notice that for any initial action $S_{0}$ of the form in (\ref{S_0 generale}), the partial derivatives  with respect to the variables $M_a$ satisfy the following linear relations over the ring $\mathcal{O}_{X_{0}}$:  
{\small{ $$ M_{1}({\partial_{M_{2}}}S_{0}) =M_{2}({\partial_{M_{1}}}S_{0}), \quad 
M_{1}({\partial_{M_{3}}}S_{0})=M_{3}({\partial_{M_{1}}}S_{0}), \quad 
 M_{2}({\partial_{M_{3}}}S_{0})=M_{3}({\partial_{M_{2}}}S_{0}). 
$$}}

\subsection{The minimal extended theory}
\label{BV-var model}
To determine the minimal extended theory {\small{$(\widetilde{X}, \widetilde{S})$}} for the gauge theory {\small{$(X_{0}, S_{0})$}} described above, we first concentrate on the construction of {\small{$\widetilde{X}$}}, arriving to the following theorem.

\begin{theorem}
\label{Theorem: extended theory for the U(2) model}
Given a  gauge theory {\small{$(X_{0}, S_{0})$}} with configuration space {\small{$X_{0}\simeq \langle M_a \rangle_{\mathbb R}$}} for $a=1, \dots, 4$, and action functional {\small{$S_{0}\in \mathcal{O}_{X_{0}}$}} of the form \eqref{S_0 generale}, the corresponding minimally-extended configuration space {\small{$\widetilde{X}$}} is a  $\mathbb{Z}$-supergraded real vector space whose explicit form depends on the action {\small{$S_{0}$}} as follows:
\begin{enumerate}
\item If {\small{$S_{0} \in \Pol_{\mathbb{R}}(M_{4})$}}, 
{\small{$$\widetilde{X} = X_{0} \oplus \langle M^{*}_{1}, \dots , M^{*}_{4}\rangle_{-1}.$$}}
\item If {\small{$GCD(\partial_{1}S_{0},  \partial_{2}S_{0}, \partial_{3}S_{0}, \partial_{4}S_{0}) = 1$}}, 
{\small{$$\hspace{13mm}\widetilde{X} = \langle E^{*} \rangle_{-3} \oplus \langle C^{*}_{1}, \cdots, C^{*}_{3}\rangle_{-2} \oplus \langle M^*_{1}, \dots, M^*_{4}\rangle_{-1} \oplus X_{0} \oplus \langle C_{1}, \cdots, C_{3} \rangle_{1} \oplus \langle E\rangle_{2} .
$$}}
\item If {\small{$GCD(\partial_{1}S_{0}, \partial_{2}S_{0}, \partial_{3}S_{0}, \partial_{4}S_{0}) = D \notin \mathbb{R}$}}, 
\vspace{1mm}\\
{\small{$\begin{array}{ll}
\widetilde{X} = & \langle K^* \rangle_{-4} \oplus \langle E^{*}_{1}, \dots, E^*_4\rangle_{-3} \oplus \langle C^{*}_{1}, \cdots, C^{*}_{6}\rangle_{-2} \oplus \langle M^*_{1}, \dots, M^*_{4}\rangle_{-1}\\
[.8ex]
&  \oplus X_{0} \oplus \langle C_{1}, \cdots, C_{6} \rangle_{1} \oplus \langle E_{1}, \dots, E_4\rangle_{2} \oplus \langle K\rangle_{3}.
\end{array}$}}
\end{enumerate} 
\end{theorem}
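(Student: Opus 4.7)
The plan is to apply Section \ref{Section: Construction of extended varieties} explicitly: build a Tate resolution $(A,\delta)$ of $J(S_{0})$ satisfying \eqref{condizione Tate resolution}, identify inductively its type-$\beta$ generators, and read off $\widetilde{X}$ via $\widetilde{X}=\mathcal{E}^{*}[1]\oplus T_{X_{0}}[1]\oplus X_{0}\oplus\mathcal{E}$. The starting point is the factorisation of the partial derivatives forced by \eqref{S_0 generale}: setting $R^{2}:=M_{1}^{2}+M_{2}^{2}+M_{3}^{2}$, $P:=2\sum_{k\geqslant 1} kR^{2(k-1)}g_{k}(M_{4})$ and $Q:=\sum_{k\geqslant 0} R^{2k}g_{k}'(M_{4})$, one has $\partial_{i}S_{0}=M_{i}P$ for $i=1,2,3$ and $\partial_{4}S_{0}=Q$. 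Condition \eqref{condizione Tate resolution} fixes $A^{-1}_{-1}=\langle M_{1}^{*},\dots,M_{4}^{*}\rangle$ with $\delta(M_{a}^{*})=\partial_{a}S_{0}$, all type-$\beta$ by definition; the three shapes displayed for $\widetilde{X}$ then correspond to the three possible behaviours of the syzygy module of $(\partial_{a}S_{0})_{a}$ in $\mathcal{O}_{X_{0}}$.

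Case (1) is the degenerate one: $P=0$ reduces the Jacobian ideal to $\langle Q\rangle\subset\mathbb{R}[M_{4}]$, and the only cocycles at degree $-1$ are the trivially closed $M_{1}^{*},M_{2}^{*},M_{3}^{*}$, which do not realise a non-trivial syzygy among the $\partial_{a}S_{0}$; unfolding the inductive definition of type-$\beta$ (which requires a genuine $\mathcal{O}_{X_{0}}$-relation among several type-$\beta$ generators, not the empty relation) shows that no type-$\beta$ generator arises below degree $-1$. In case (2), the three gauge-invariance syzygies $M_{i}\partial_{j}S_{0}=M_{j}\partial_{i}S_{0}$ for $\{i,j\}\subset\{1,2,3\}$ recorded just after \eqref{S_0 generale} produce three type-$\beta$ generators $C_{1}^{*},C_{2}^{*},C_{3}^{*}$ at degree $-2$; the hypothesis $\gcd(P,Q)=1$ guarantees that no further syzygy involving $M_{4}^{*}$ appears. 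In case (3), writing $P=DP'$, $Q=DQ'$ with $\gcd(P',Q')=1$, three more independent syzygies $Q'M_{i}^{*}-M_{i}P'M_{4}^{*}$, $i=1,2,3$, arise, producing $C_{4}^{*},C_{5}^{*},C_{6}^{*}$. Iterating one step, for case (2) a direct solution of the linear system $\delta(\sum_{\alpha}g_{\alpha}C_{\alpha}^{*})=0$ shows the cocycle module at degree $-2$ is generated modulo coboundaries by $M_{3}C_{1}^{*}-M_{2}C_{2}^{*}+M_{1}C_{3}^{*}$, giving the single $E^{*}$; for case (3) one finds four independent second syzygies (the above Koszul-like one plus three Koszul--Rees pairings of $(C_{4}^{*},C_{5}^{*},C_{6}^{*})$ with $(C_{1}^{*},C_{2}^{*},C_{3}^{*})$), then one further third syzygy $Q'E_{1}^{*}+M_{3}E_{2}^{*}-M_{2}E_{3}^{*}+M_{1}E_{4}^{*}$ at degree $-4$, whose $\delta$-closedness follows by direct substitution. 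Dualising via \eqref{eq: decomp X-tilde} then yields the displayed descriptions.

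The main obstacle is \emph{termination}: one must prove that no further type-$\beta$ generators appear below degree $-3$ in case (2) and below degree $-4$ in case (3). I plan to establish this by solving the linear systems $\delta(\alpha)=0$ in those top degrees and showing that every solution is either a coboundary or a trivially-closed single generator. The combinatorics at the top degree is rigid: the coefficients of the top type-$\beta$ relation are, up to signs and a factor $Q'$, the variables $M_{1},M_{2},M_{3}$ themselves, so a direct UFD computation in $\mathbb{R}[M_{1},\dots,M_{4}]$ should force any putative further cocycle to be a consequence of $\delta^{2}=0$. A secondary, convention-sensitive subtlety is the case-(1) argument above, which hinges on carefully excluding trivially closed $M_{i}^{*}$'s from being counted; this has to be spelled out explicitly since the literal reading of the inductive definition is otherwise ambiguous at this boundary case.
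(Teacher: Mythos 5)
Your proposal follows essentially the same route as the paper: run Tate's algorithm degree by degree, identify the type-$\beta$ generators from the syzygies of the partial derivatives (your $C^{*}_{i}$, $E^{*}$, $E^{*}_{l}$, $K^{*}$ match the paper's generators up to indexing and signs), and stop when the top degree admits no further type-$\beta$ cocycle. Your extra care in Case (1) is warranted --- the paper simply asserts $\Ker(\delta^{-1}_{-1})=0$, which is literally false when $\partial_{1}S_{0}=\partial_{2}S_{0}=\partial_{3}S_{0}=0$, so some convention like yours is needed --- and your termination worry is resolved exactly as you suspect: in the top degree the only candidate type-$\beta$ cocycles are $R$-multiples of the single top generator, and these are never $\delta$-closed.
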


\begin{proof}
We recall that the strategy to construct {\small{$\widetilde{X}$}} is by extending the initial configuration space {\small{$X_{0}$}} via the introduction of anti-fields/anti-ghost fields of decreasing degree, starting in degree $-1$. Because the anti-fields in degree $-1$ are determined by the fields in {\small{$X_{0}$}}, we have the following exact sequence:
$$
A^{-1}_{-1}:=\Pol(M_a)\langle M_a^*\rangle
 \xrightarrow{\delta^{-1}_{-1}}
\Pol(M_a)
 \xrightarrow{\pi}
 J(S_0)  \rightarrow 
0\ ,
$$
where $\pi$ is the projection on the quotient and {\small{$M^{*}_{a}$}}, for $a=1, \dots, 4$, are Grassmannian variables of ghost degree $-1$, used to extended the ring {\small{$\mathcal{O}_{X_{0}} = \Pol_{\mathbb{R}}(M_{a})$}} to a $\mathbb{Z}_{\leqslant 0}$-graded algebra. Concerning the coboundary operator $\delta$, it is completely determined by imposing that it acts as follows on the generators {\small{$M_{a}^{*}$}}:
 $$\delta^{-1}_{-1}(M_a^*)=\partial_a S_{0}$$ 
 and then extending it by linearity on {\small{$\mathcal{O}_{X_{0}}$}}. To establish what is the minimal number of variables that need to be introduced in degree $-2$, we have to find a set of generators for the cohomology group {\small{$H^{-1}(A^{-1})=\Ker(\delta^{-1}_{-1})/\Imag(\delta^{-1}_{-2})\ .$}}
By a direct computation one can check that the result depends on the partial derivatives of the action {\small{$S_{0}$}} being or not coprime. Thus we analyze the different cases separately.\\
\\
\noindent
{\em Case 1}: Suppose that {\small{$S_{0}\in \Pol_{\mathbb{R}}(M_{4})$}}. Then, because {\small{$\Ker(\delta^{-1}_{-1}) = 0$}}, the algorithm stops at degree $-1$, giving:
$$\widetilde{X} = X_{0} \oplus \langle M^{*}_{a}\rangle_{-1}, \quad a=1, \dots, 4.$$ 
Suppose that {\small{$S_{0} \in \Pol_{\mathbb{R}}(M_{a})\setminus\Pol_{\mathbb{R}}(M_{4})$}}. Then, with an explicit computation one can verify that:
$$\Ker(\delta^{-1}_{-1}) = \langle \beta_{i},\gamma_{p}\rangle, \quad \mbox{ for } \quad \beta_{i}= \sum_{j, k} \epsilon_{ijk}M_{j}M_{k}^{*}, \quad \gamma_{p} = BM_{p}^{*}-M_{p}AM_{4}^{*}$$
where {\small{$\epsilon_{ijk}$}} is a totally antisymmetric tensor with {\small{$\epsilon_{123}= 1$}}, $i,p=1,2,3$, and {\small{$A$}}, {\small{$B \in\Pol(M_a)$}} coprime such that 
$$ \partial_{i}S_{0}=M_{i}AD  \quad \mbox{ and } \quad \partial_{4}(S_{0})=BD, \quad  \mbox{ with } \quad D:=GCD(\partial_{1}S_{0}, \dots, \partial_{4}S_{0}).$$
Analogous computations lead to the following results:
\begin{equation}
\label{H1}
\Imag(\delta^{-1}_{-2}) = \langle AD\beta_{i}, D\gamma_{p}\rangle  \quad \quad \mbox{ and so } \quad \quad H^{-1}(A)= \frac{\langle \beta_{i}, \gamma_{p}\rangle}{\langle AD\beta_{i}, D\gamma_{p}\rangle} 
\end{equation}
for $i, p=1, 2, 3$. Hence, two possibilities appear, depending whether the partial derivatives {\small{$\partial_{i}S_{0}$}} are coprime or not. \\
\\
\noindent
 {\em Case 2}: Suppose that the partial derivatives {\small{$\partial_{i} S_{0}$}} are all coprime, that is, $D=1$. Then, the cohomology group we are interested in reduces to: 
$$H^{-1}(A)=  \frac{\langle \beta_{1}, \beta_{2}, \beta_{3}\rangle}{\langle A \beta_{1}, A \beta_{2}, A\beta_{3}\rangle}\ .$$
Therefore, as prescribed by the algorithm, we introduce three real variables {\small{$C_{i}^{*}$}} with degree $-2$, which determine a graded algebra {\small{$A^{-2}$}} defined to be:
 \begin{equation}
A^{-2}:=\Pol_{\mathbb{R}}(M_a)\langle M_a^*, C^{*}_{i}\rangle\ , 
\end{equation}
for $a=1, \dots 4$, $i= 1,2 ,3$. Concerning the action of the coboundary operator {\small{$\delta^{-2}$}} on this graded algebra, it is uniquely determined by imposing linearity, being graded derivative and homogeneous on {\small{$\Pol_{\mathbb{R}}(M_a)$}} together with fixing its action on the $-2$-degree generators {\small{$C^{*}_{i}$}} to be {\small{$\delta^{-2}(C^{*}_{i})=  \beta_{i}$}}, for $i=1, \dots, 3$. Continuing with the algorithm, we see that an extra variable has to be introduced in degree $-3$. Indeed, there is only one generator $\xi$ of type $\beta$ for the cohomology group {\small{$H^{-2}(A^{-2})$}}, which has the following form:
 $$\xi=M_1C^*_{1}+M_2C^*_{2}+M_3C^*_{3}\ .$$
Hence {\small{$A^{-3}$}} is defined to be the following extension of the graded algebra {\small{$A^{-2}$}}: 
 \begin{equation}
A^{-3}:=\Pol(M_a)\langle M_a^*,  C^{*}_{i}, E^*\rangle\ ,
\end{equation}
with {\small{$E^{*}$}} a Grassmannian variable of degree $-3$. Moreover, {\small{$A^{-3}$}} can be endowed with a differential graded algebra structure, where the coboundary operator {\small{$\delta^{-3}$}} is defined to be the extension of {\small{$\delta^{-2}$}} to {\small{$A^{-3}$}} satisfying {\small{$\delta^{-3}(E^{*}) = \xi.$}} Because in degree $-3$ we have introduced only one independent variable it is straightforward to conclude that the algorithm stops at this step. This concludes the proof of the theorem in the coprime case. \\
\\
\noindent
{\em Case 3}: Under the hypothesis that the partial derivatives {\small{$\partial_{i}S_{0}$}} are not coprime, the cohomology group {\small{$H^{-1}(A)$}} in (\ref{H1}) has six independent generators. Then, six real independent variables {\small{$C^{*}_{j}$}} of degree $-2$ have to be introduced. Thus we define the algebra {\small{$A^{-2}$}} as the following extension: 
$$A^{-2}:=\Pol_{\mathbb{R}}(M_a)\langle M_a^*, C^{*}_{j}\rangle\ , $$
for $j=1, \dots, 6$ and with the coboundary operator $\delta$ on {\small{$A^{-1}$}} extended to a coboundary operator on {\small{$A^{-2}$}} by imposing {\small{$\delta^{-2}(C^{*}_{i}) = \beta_{i},$}} for $ i =1, \dots, 6$, to be the action on the generators of degree $-2$. Continuing with the algorithm, the variables of degree $-3$ that have to be introduced are determined by the linear relations with coefficients in {\small{$\Pol_{\mathbb{R}}(M_a)$}} existing among the elements {\small{$\delta(C_{1}^{*}), \dots, \delta(C_{6}^{*})$}}. With an explicit computation, we establish the following type-$\beta$ generators:
$$\alpha_{1}= \sum_{i} M_iC^*_{i}, \quad   \quad \alpha_{i+1}= - BC_{i}^{*} - \sum_{j,k} \epsilon_{ijk} M_{j}C_{k+3}^{*}, \quad \quad i, j, k=1, 2, 3. 
$$
Corresponding to these four generators we introduce four Grassmannian variables {\small{$E^{*}_{l}$}} obtaining that:
$$A^{-3}:=\Pol(M_a)\langle M_a^*,  C^{*}_{j}, E^{*}_{l}\rangle\ , $$
for $a, l= 1, \dots, 4$, $j=1, \dots, 6$. As usual, we extend {\small{$\delta^{-2}$}} to a coboundary operator on {\small{$A^{-3}$}} by imposing that {\small{$\delta^{-3}(E^{*}_{l}) = \alpha_{l},$}} for $l=1, \dots, 4$. Finally, we find that there is only one generator $\xi$ of type $\beta$ in degree $-3$, which is
$$\xi= BE^{*}_{1} + \sum_{i=1}^{3}M_iE^{*}_{i+1}\ .$$
Thus an extra real variable {\small{$K^{*}$}} of degree $-4$ has to be added, determining
$$A^{-4}:=\Pol(M_a)\langle M_a^*, C^{*}_{j}, E^{*}_{l}, K^*\rangle\, 
$$
for $a, l=1, \dots 4$, $ j=1, \dots, 6$ and with, as coboundary operator, the extension of {\small{$\delta^{-3}$}} via the requirement that {\small{$\delta^{-4}(K^{*}):=\xi$}}. Having found only one generator in degree $-4$, the construction automatically stops at this stage. Hence, in the non-coprime case, the minimally-extended configuration space is
\begin{center}
{\small{$\begin{array}{ll}
\widetilde{X} = & \langle K^* \rangle_{-4} \oplus \langle E^{*}_{1}, \dots, E^*_4\rangle_{-3} \oplus \langle C^{*}_{1}, \cdots, C^{*}_{6}\rangle_{-2} \oplus \langle M^*_{1}, \dots, M^*_{4}\rangle_{-1}\\
[.8ex]
&  \oplus X_{0} \oplus \langle C_{1}, \cdots, C_{6} \rangle_{1} \oplus \langle E_{1}, \dots, E_4\rangle_{2} \oplus \langle K\rangle_{3}.
\end{array}$}}
\end{center}
\vspace{-3mm}
\end{proof}

To complete the construction of an extended theory {\small{$(\widetilde{X}, \widetilde{S})$}} associated  to the initial theory {\small{$(X_{0}, S_{0})$}} we still have to determine the extended action {\small{$\widetilde{S}$}}, that is, a functional {\small{$\widetilde{S}: \widetilde{X} \rightarrow \mathbb{R}$}} that solves the classical master equation on {\small{$\widetilde{X}$}}. Because we have found different extended configuration spaces depending on the properties of the initial action {\small{$S_{0}$}}, these would determine different extended actions. We concentrate on the generic case, which is the one described in point $(2)$ of Theorem \ref{Theorem: extended theory for the U(2) model}. In this setting, we have the following result.

\begin{theorem}
\label{BV variety model}
 Let {\small{$(X_{0}, S_{0})$}} be a gauge theory with configuration space {\small{$X_{0} \cong A^{4}_{\mathbb{R}}$}} and action functional {\small{$S_{0}$}} as in (\ref{S_0 generale}) such that {\small{$D:=GCD(\partial_{i}S_{0}) =1$}}. Then, given an extended configuration space {\small{$\widetilde{X}$}} as in Theorem \ref{Theorem: extended theory for the U(2) model}, the most general solution of the classical master equation on {\small{$\widetilde{X}$}} that is linear in the anti-fields, of at most degree $2$ in the ghost fields and with coefficients in {\small{$\Pol_{\mathbb{R}}(M_{a})$}} is the following one:
{\small{\begin{multline*}
\widetilde{S}= S_{0} + \sum_{i, j, k} \epsilon_{ijk}\alpha_{k}M_{i}^{*}M_{j}C_{k} + \sum_{i, j,k} C_{i}^{*}\big[\tfrac{\alpha_{j}\alpha_{k}}{2\alpha_{i}} (\beta\alpha_{i}M_{i}E + \epsilon_{ijk}C_{j}C_{k}) \\
+ M_{i}T \big( \sum_{a, b,c} \epsilon_{abc}\tfrac{\alpha_{b}\alpha_{c}}{2\alpha_{i}}M_{a}C_{b}C_{c}\big)\big]
\end{multline*}}}
where $\alpha_{i}, \beta\in {\small{\mathbb{R}\backslash \left\lbrace 0 \right\rbrace}}$, {\small{$T \in \Pol_{\mathbb{R}}(M_a)$}}, and $\epsilon_{ijk}$ ($\epsilon_{abc}$) is the totally anti-symmetric tensor in three (different) indices $i, j, k \in \{ 1, 2, 3 \}$ ($a, b, c \in \{ 1, 2,3\}$) with $\epsilon_{123}=1$.
\end{theorem}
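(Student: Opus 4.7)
The plan is to impose the classical master equation $\{\widetilde{S}, \widetilde{S}\}=0$ directly on the most general ansatz compatible with the three constraints in the statement, organizing the resulting conditions by positive degree as in the proof of Theorem~\ref{th_action, existence}. Ghost degree zero, linearity in the anti-fields and positive degree at most $2$ force an ansatz of the form
\[
\widetilde{S} = S_{0} + \sum_{i,j} A_{ij}(M)\, M_{i}^{*} C_{j} + \sum_{i} B_{i}(M)\, C_{i}^{*} E + \sum_{i,j,k} D_{ijk}(M)\, C_{i}^{*} C_{j} C_{k},
\]
with $A_{ij}, B_{i}, D_{ijk} \in \Pol_{\mathbb{R}}(M_{a})$ and $D_{ijk}$ antisymmetric in $(j,k)$; no $E^{*}$-term is admissible since any ghost combination compensating $E^{*}$ would have positive degree at least $3$.

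Next I would split $\{\widetilde{S}, \widetilde{S}\}$ according to the number and type of ghost generators it contains, using that the only nonzero elementary brackets are $\{M_{a}^{*}, M_{b}\}=\delta_{ab}$, $\{C_{i}^{*}, C_{j}\}=\delta_{ij}$ and $\{E^{*}, E\}=1$. At positive degree $1$ only the cross bracket $\{S_{0}, A_{ij}M_{i}^{*}C_{j}\}$ survives, giving $\sum_{i} A_{ij}(M)\partial_{i}S_{0} = 0$ for every $j$; by the coprime hypothesis and the computation carried out in the proof of Theorem~\ref{Theorem: extended theory for the U(2) model}, $\Ker(\delta^{-1}_{-1})$ is generated over $\mathcal{O}_{X_{0}}$ by $\beta_{1},\beta_{2},\beta_{3}$, so each column of $A_{ij}$ reassembles into a polynomial combination of the $\beta_{k}$, reducing the unknown to a polynomial-matrix coefficient $\lambda_{kj}(M)$.

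At positive degree $2$ I would separate the $E$-sector from the $C_{j}C_{l}$-sector. The $E$-sector combines $\{S_{lin}, B_{i}C_{i}^{*}E\}$ with the $E$-piece of $\{A_{ij}M_{i}^{*}C_{j}, D_{klm}C_{k}^{*}C_{l}C_{m}\}$, and the coprimality of the $\partial_{i}S_{0}$ forces $B_{i}(M)$ to be proportional to $M_{i}$ with a single real constant, which after normalization supplies the parameter $\beta$ and the prefactors $\alpha_{j}\alpha_{k}$. The $C_{j}C_{l}$-sector, exploiting the antisymmetry together with the relations $M_{i}\partial_{j}S_{0}=M_{j}\partial_{i}S_{0}$ for $i,j\in\{1,2,3\}$ displayed just after~\eqref{S_0 generale}, forces $\lambda_{kj}(M)=\alpha_{k}\delta_{kj}$ with $\alpha_{k}\in\mathbb{R}\setminus\{0\}$ and pins down $D_{ijk}$ up to the single polynomial cocycle $M_{i}T(M)\sum_{a,b,c}\epsilon_{abc}\tfrac{\alpha_{b}\alpha_{c}}{2\alpha_{i}}M_{a}C_{b}C_{c}$, giving the residual polynomial freedom $T\in\Pol_{\mathbb{R}}(M_{a})$.

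The main obstacle will be to exclude any additional polynomial freedom in $A_{ij}$ and $D_{ijk}$. This rests on the type-$\beta$ nature of $\beta_{1},\beta_{2},\beta_{3}$ and $\xi=\sum_{i}M_{i}C_{i}^{*}$: the same cancellation used in~(\ref{eq con uso di tipo beta generators}) inside the proof of Theorem~\ref{th_action, existence} is precisely what collapses the polynomial matrix $\lambda_{kj}(M)$ to the constant diagonal $\alpha_{k}\delta_{kj}$. Once these algebraic constraints are solved, a direct (mechanical) computation checks that the displayed $\widetilde{S}$ satisfies $\{\widetilde{S}, \widetilde{S}\}=0$ exactly and not merely modulo $F^{3}\mathcal{O}_{\widetilde{X}}$; the final cancellations rely on the identities $\sum_{j,k}\epsilon_{ijk}M_{j}M_{k}=0$ and $\delta(\xi)=\sum_{i}M_{i}\beta_{i}=0$, which express the type-$\beta$ character of $\xi$ in the coprime case.
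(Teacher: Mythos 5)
Your overall strategy---impose $\{\widetilde{S},\widetilde{S}\}=0$ on the most general admissible ansatz and solve the resulting constraints degree by degree---is a legitimate and in fact more ambitious route than the one taken in the paper. The paper instead runs the inductive algorithm of Theorem \ref{th_action, existence}: it \emph{posits} the linear ansatz $\widetilde{S}_{\leqslant 1}$ with parameters $\alpha_{i},\beta$, computes $\{\widetilde{S}_{\leqslant 1},\widetilde{S}_{\leqslant 1}\}$ explicitly in \eqref{compensare}, solves $(\delta\otimes Id)\nu=-\tfrac12\{\widetilde{S}_{\leqslant 1},\widetilde{S}_{\leqslant 1}\}$ with $\nu=\sum g_{k}^{ij}C_{k}^{*}C_{i}C_{j}$, and then uses the residual freedom $P_{i}$ in the $g_{k}^{ij}$ to make the bracket vanish exactly, forcing $P_{i}=M_{i}T$.

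There is, however, a concrete gap in your degree-$1$ step. The condition $\sum_{a}A_{aj}\partial_{a}S_{0}=0$ says that $\sum_{a}A_{aj}M_{a}^{*}$ lies in $\Ker(\delta^{-1}_{-1})$, and by the computation in the proof of Theorem \ref{Theorem: extended theory for the U(2) model} this kernel is $\langle\beta_{i},\gamma_{p}\rangle$ with $\gamma_{p}=BM_{p}^{*}-M_{p}AM_{4}^{*}$ (where $\partial_{p}S_{0}=M_{p}A$ and $\partial_{4}S_{0}=B$), \emph{not} $\langle\beta_{1},\beta_{2},\beta_{3}\rangle$. In the coprime case the $\gamma_{p}$ are coboundaries of the Tate complex, but they are genuine, $M_{4}^{*}$-dependent kernel elements, so the degree-$1$ constraint alone does not reduce each column of $A_{ij}$ to a combination $\sum_{k}\lambda_{kj}(M)\beta_{k}$; you must separately show that the degree-$2$ constraints eliminate the $\gamma_{p}$-components (note that the claimed $\widetilde{S}$ contains no $M_{4}^{*}$ at all). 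Relatedly, the collapse of $\lambda_{kj}(M)$ to the constant diagonal $\alpha_{k}\delta_{kj}$---which is precisely the content of the words ``most general'' in the statement---is asserted by appeal to the cancellation \eqref{eq con uso di tipo beta generators}, but that equation concerns the vanishing of $\{\widetilde{S}_{\leqslant 1},\widetilde{S}_{\leqslant 1}\}$ modulo $I_{\widetilde{X}}^{\geqslant 2}$ for a \emph{fixed} linear action, not the classification of all admissible linear parts. As written, your argument would at best establish that the displayed $\widetilde{S}$ is \emph{a} solution, not that it exhausts all solutions of the stated form; to close the gap you need to carry the unknowns $\lambda_{kj}(M)$ and $\mu_{pj}(M)$ (the $\gamma_{p}$-coefficients) through the degree-$2$ analysis and show explicitly that the $E$-sector and $C_{j}C_{l}$-sector equations admit no solution unless $\mu_{pj}=0$ and $\lambda_{kj}=\alpha_{k}\delta_{kj}$ with $\alpha_{k}$ constant and nonzero.
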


\begin{proof}
We start considering the approximation of the extended action that is linear in the positively-graded generators, that is, 
$$\begin{array}{ll}
\widetilde{S}_{\leqslant 1} & = S_{0} + \sum_{i, j, k} \epsilon_{ijk}\alpha_{k}M_{i}^{*}M_{j}C_{k} + \sum_{i, j,k} C_{i}^{*}\big[\tfrac{\alpha_{j}\alpha_{k}}{2\alpha_{i}} \beta\alpha_{i}M_{i}E \big]\ .
\end{array}
$$
 We explicitly compute the quantity {\small{$\big\{ \widetilde{S}_{\leqslant 1}, \widetilde{S}_{\leqslant 1}\big\}$}}, obtaining the following expression:
\begin{equation}
\big\{ \widetilde{S}_{\leqslant 1}, \widetilde{S}_{\leqslant 1}\big\} = 2\big[ \sum_{i,j} \alpha_{i}\alpha_{j}M_{i}M^{*}_{j}C_{i}C_{j} + \sum_{i, j, k} \beta \epsilon_{ijk} \alpha_{j} \alpha_{k}^{2} M_{j}C_{k}C_{i}^{*}E \big] ,
\label{compensare}
\end{equation}
with $i, j, k=1, 2, 3$. Since {\small{$\big\{ \widetilde{S}_{\leqslant 1}, \widetilde{S}_{\leqslant 1}\big\}\neq 0$}}, to construct the approximated action {\small{$\widetilde{S}_{\leqslant 2}$}}, we need to introduce a generic element {\small{$\nu \in  I^{\geqslant 2}_{\widetilde{X}}\cap F^{2}\mathcal{O}_{\widetilde{X}}$}}, which is determined by imposing
\begin{equation}
\label{condizione per determinare nu}
2 (\delta \otimes Id) \nu + \big\{ \widetilde{S}_{\leqslant 1}, \widetilde{S}_{\leqslant 1}\big\} \equiv 0 \quad  \mbox{  mod  } F^{3}\mathcal{O}_{\widetilde{X}}\ .
\end{equation}
Because in (\ref{compensare}) do not appear terms depending on any {\small{$\partial_{i}S_{0}$}}, we could restrict to consider an element {\small{$\nu$}} not depending on the anti-fields {\small{$M_{a}^{*}$}}. Moreover, since condition (\ref{condizione per determinare nu}) is taken modulo {\small{$F^{3}\mathcal{O}_{\widetilde{X}}$}}, we could start considering a {\small{$\nu$}} which is precisely of positive degree $2$, because terms in {\small{$\nu$}} of higher positive degree would contribute quantities which automatically belongs to {\small{$F^{3}\mathcal{O}_{\widetilde{X}}$}}. Thus we consider an element {\small{$\nu$}} whose generic form is the following one:
$$\nu= \sum_{i<j} g_{k}^{ij} C_{k}^{*}C_{i}C_{j}$$
for $i,j,k=1, 2, 3$, and {\small{$g_{k}^{ij} \in \Pol_{\mathbb{R}}(M_{a})$}}.  Hence, to satisfy \eqref{condizione per determinare nu}, the polynomials {\small{$g_{k}^{ij}$}} should verify the following equalities:
\begin{equation}
\alpha_{[i}g_{i}^{ij}M_{j]}=0, \quad \quad \alpha_{i}\alpha_{j}M_{i} + \epsilon_{ijk}\alpha_{[i}g_{i}^{ij}M_{k]} =0,
\end{equation}
where $i, j, k=1, 2, 3$ are different, the polynomials {\small{$g_{k}^{ij}$}} are antisymmetric in the top indices and the bracket enclosing the indices denotes the anti-symmetrization in the indices themselves, that is:
$$\alpha_{[i}g_{i}^{ij}M_{j]} := \alpha_{i}g_{i}^{ij}M_{j} - \alpha_{j}g_{j}^{ij}M_{i}.$$
Hence: 
$$
g_{i}^{ij} = \epsilon_{ijk}\alpha_{j}M_{i}P_{k}, \quad g_{j}^{ij} = \epsilon_{ijk} \alpha_{i}M_{j}P_{k}, \quad g_{k}^{ij} =\epsilon_{ijk}\frac{\alpha_{i} \alpha_{j}}{\alpha_{k}}[1 + M_{k}P_{k}], 
$$
with $i, j, k=1, 2, 3$, $i<j$, {\small{$P_{i}\in \Pol_{\mathbb{R}}(M_{a})$}}. Therefore, an approximated action that solves the classical master equation up to terms of positive degree $2$ is {\small{$\widetilde{S}_{\leqslant 2} := \widetilde{S}_{\leqslant 1} + \nu$}}, where the polynomials {\small{$g^{ij}_{k}$}} are the ones just determine. To proceed with the algorithm, we compute the quantity {\small{$\big\{ \widetilde{S}_{\leqslant 2}, \widetilde{S}_{\leqslant 2}\big\}$}} and we verify that the freedom in choosing the polynomials  $P_{i}$ can be used to convert the approximated solution {\small{$\widetilde{S}_{\leqslant 2}$}} in an exact solution to the classical master equation by imposing
$$P_{i}= M_{i}T,$$
where {\small{$T\in \Pol_{\mathbb{R}}(M_{a})$}} is still a free parameter. Hence, with this choice of polynomials {\small{$P_{i}$}}, the functional {\small{$\widetilde{S}_{\leqslant 2}$}} satisfies all the required conditions and the claimed statement immediately follows. 
\end{proof}

\begin{oss}
The relation between the initial gauge theory {\small{$(X_{0}, S_{0})$}}  and the minimal extended theory {\small{$(\widetilde{X}, \widetilde{S})$}} or, in other words, the role played by the choice of a basis for {\small{$X_{0}$}} still deserves further investigations. Indeed, while it is convincing that the BV construction detects the structure of the gauge group, as we have seen in the proof of Theorem \ref{Theorem: extended theory for the U(2) model}, the procedure to relate different extended theories corresponding to different choices for the basis of {\small{$X_{0}$}} still has to be determined and it is left for future researches.
\end{oss}

With the result proved in Theorem \ref{BV variety model} we have constructed an extended theory corresponding to a {\small{$U(2)$}}-matrix model. Not only is this result of interest given the current lack of examples of a BV-extended theory, but also it can be viewed as the first step towards the construction of a (gauge-fixed) BRST-cohomology complex for this {\small{$U(2)$}}-matrix model. The analysis of this cohomology is beyond the scope of this article and we address it in \cite{articolo_cohomology}.

\bibliographystyle{plain}

\end{document}